\newtheorem{Lemma}{Lemma}
\newtheorem{Corollary}{Corollary}
\newtheorem{lemma}[Lemma]{$\mathbf{Lemma}$}
\newtheorem{corollary}[Corollary]{$\mathbf{Corollary}$}
\newcounter{problem}
\newcounter{save@equation}
\newcounter{save@problem}
\newenvironment{problem}
{\setcounter{problem}{\value{save@problem}}%
  \setcounter{save@equation}{\value{equation}}%
  \let\c@equation\c@problem
  \subequations
}
{\endsubequations
  \setcounter{save@problem}{\value{equation}}%
  \setcounter{equation}{\value{save@equation}}%
}
\begin{document}%%
\title{  {    Hybrid NOMA Assisted OFDMA Uplink Transmission    }}

\author{ Zhiguo Ding, \IEEEmembership{Fellow, IEEE}  and H. Vincent Poor, \IEEEmembership{Life Fellow, IEEE}    \thanks{ 
  
\vspace{-2em}

Z. Ding is with Khalifa University, Abu Dhabi, UAE, and University
of Manchester, Manchester, M1 9BB, UK.  
H. V. Poor is with Princeton University, Princeton, NJ 08544, USA.
 
%The work of Z. Ding was supported by the UK EPSRC under grant number EP/P009719/2 and by   H2020-MSCA-RISE-2020 under grant number 101006411. The work of H. V. Poor was supported by the U.S. National Science Foundation under Grant CCF-1908308.

  }\vspace{-3em}}
 \maketitle

\begin{abstract}
Hybrid non-orthogonal multiple access (NOMA) has recently received  significant  research interest due to its ability  to efficiently use resources from different domains and also its compatibility with various orthogonal multiple access (OMA) based legacy networks. Unlike   existing studies on hybrid NOMA  that focus on combining NOMA with time-division multiple access (TDMA), this work considers hybrid NOMA assisted orthogonal frequency-division multiple access (OFDMA). In particular, the impact of a unique feature of hybrid NOMA assisted OFDMA, i.e., the availability of users' dynamic channel state information, on the system performance is analyzed from the following two perspectives. From the optimization perspective,   analytical results are developed which show that with hybrid NOMA assisted OFDMA,  the pure OMA mode is rarely  adopted by the users, and the pure NOMA mode could be optimal for minimizing the users' energy consumption, which    differs from  the hybrid TDMA  case. From the statistical  perspective, two new performance    metrics, namely the power outage probability and the power diversity gain, are developed to quantitatively  measure the performance gain of hybrid NOMA over OMA. The developed analytical results also demonstrate the ability  of hybrid NOMA to meet the users' diverse energy profiles. 
\end{abstract}\vspace{-0.5em}

\begin{IEEEkeywords}
Non-orthogonal multiple access (NOMA), orthogonal frequency-division multiple access (OFDMA),  power outage probability, power diversity gain. 
\end{IEEEkeywords}
\vspace{-0.5em} 
 
\section{Introduction}
With the successful rollout of the fifth generation (5G) mobile system, the focus of the   research community has   shifted towards the design of the sixth generation (6G) system \cite{you6g,imt2030}.  In particular, the 6G system is expected to support  important but challenging  services, including ultra-massive machine type communications (umMTC) and enhanced ultra-reliable and low-latency communications (euRLLC), which require improved  use of bandwidth resources and energy \cite{8712527,9369424}.   One of the key enabling 6G techniques for enhancing  both the      spectrum and energy efficiency   is   next generation multiple access (NGMA), which is envisioned to exhibit the following three features:    multi-mode compatibility, multi-domain utilization, and multi-dimensional optimization \cite{9693417, Scizigu,9681910,9681865}.  

Hybrid non-orthogonal multiple access (NOMA) is     a potential NGMA candidate  that can realize the three aforementioned  features simultaneously. Examples of hybrid NOMA include    the   schemes    developed  in \cite{9679390, 10258351, hnomadown}, where  hybrid NOMA was  implemented as an add-on feature of a   time-division multiple access (TDMA)   legacy  network, and    hence its compatibility with the  orthogonal multiple access (OMA) legacy network is guaranteed. The multi-domain utilization feature of hybrid NOMA is due to the fact that a user can  utilize the resources from two domains, namely the time and power domains. In particular, in the time domain,  a user can have access to not only  its own TDMA time slot, but also those time slots which would belong to the other users in the legacy OMA network. Furthermore, in the power domain, the users using the same time slot are assigned to different power levels to avoid multi-access interference. It is worthy pointing out that in hybrid NOMA, a user might choose the NOMA mode in one time slot and the OMA mode in another time slot, and hence hybrid NOMA is  fundamentally different from   schemes such as those in  \cite{ 8823023,9352956,8641304,9556147,9964376}  which force  a user to choose either   OMA or NOMA.   The multi-dimensional optimization feature of hybrid NOMA means that resource allocation over multiple  domains is  carried out jointly.  Compared to conventional  resource allocation,  which is carried out in a single snapshot (e.g., a single time slot in the TDMA case) \cite{8895763,8618435,9525063,10225347}, resource allocation for hybrid NOMA is more challenging, due to its  multi-snap-shot (e.g., multiple time slots in the TDMA case)  nature.

Unlike existing studies of hybrid NOMA which consider TDMA as the OMA protocol, in this paper we consider hybrid    NOMA assisted orthogonal frequency-division multiple access (H-NOMA-OFDMA), which is motivated by the following two considerations. First,   OFDMA is still expected to play an important role in the 6G system, because of its high spectral efficiency and its capability to combat frequency selective fading \cite{docom6g}. And the second  is that in hybrid NOMA assisted TDMA (H-NOMA-TDMA), time diversity cannot be fully exploited due to the following dilemma. For   users whose  channel gains are  constant over multiple time slots, there is no time diversity gain, while  for   users whose  channel gains are   time varying, the time diversity gain indeed becomes available, but a non-causal channel state information (CSI) assumption is required to use such diversity, i.e.,   the user's CSI needs to be available in advance for resource allocation \cite{Scizigu}.  In H-NOMA-OFDMA, a user's channels over different subcarriers are naturally different and can be straightforwardly acquired by the base station, which facilitates the use of     frequency diversity. The main contribution of this paper is to analyze  the impact of this unique feature of H-NOMA-OFDMA, i.e., the availability of users' dynamic CSI, on the system performance by answering the following two questions: 
\begin{itemize}
\item {\it Question 1: Can hybrid NOMA outperform pure OMA and pure NOMA?} This question is answered from an optimization perspective, by first formulating resource allocation for H-NOMA-OFDMA as an optimization problem, and then analyzing the properties of  its optimal solution. In particular, our developed analytical results reveal that the optimal resource allocation in H-NOMA-OFDMA is   fundamentally different from that  of TDMA. For example, the pure OMA solution, i.e., each user adopts the OMA mode, has been shown to be optimal in the H-NOMA-TDMA system, if the time slot durations are not optimized \cite{9679390}. However, in H-NOMA-OFDMA, the pure OMA solution is shown  to be suboptimal, even if the subcarrier bandwidth is not optimized. In addition, in H-NOMA-TDMA, the pure NOMA mode is always suboptimal. However, in the OFDMA case, it is shown  that for some users, the pure NOMA mode is optimal, and the condition for the optimality of pure NOMA is established in this paper. 

\item {\it Question 2: How large is the performance gain of hybrid NOMA over pure OMA?} This question is answered from a statistical  perspective, by first introducing two novel performance    metrics, namely the power outage probability and the power diversity gain. The connections between the new metrics and   conventional ones, such as the outage probability and the diversity gain, are first illustrated \cite{Zhengl03,Laneman04}. Then,  the two metrics are   used to    measure the performance gain of hybrid NOMA over OMA, and  illustrate how the frequency diversity can be utilized by hybrid NOMA to reduce   the users' energy consumption. For example, analytical results are developed in the paper to show that  the power diversity gain of H-NOMA-OFDMA  is proportional to the number of users participating in the NOMA cooperation, whereas the power diversity gains of OMA and H-NOMA-TDMA are always one.  Furthermore, the presented analytical and simulation  results also show that the energy consumption of the users in H-NOMA-OFDMA can be adjusted according to the users' different energy profiles. 

\end{itemize}

%Intuitively, the difference between hybrid  NOMA assisted TDMA and OFDMA  should be trivial, i.e.,  the insightful findings made to hybrid NOMA assisted TDMA should be applicable to the OFDMA one, which is not true as shown in this paper. 

\section{System Model}\label{section 2} 
 Consider an OFDMA based legacy network with $M$ uplink users that communicate with the same base station, and are denoted by ${\rm U}_m$, $1\leq m \leq M$, respectively.  Without loss of generality,  assume that each user is assigned to a single subcarrier in the considered legacy network, and denote the   subcarrier  assigned to ${\rm U}_m$ by ${\rm F}_m$, $1\leq m \leq M$.  Each node is assumed to be equipped with a single antenna, and we assume that all users have the same target data rate for their uplink transmission, denoted by $R$.   
 
In this paper, it is assumed that the users have different energy profiles, i.e., some users are more energy constrained than the others.  
 Without loss of generality, we further assume that the users are ordered according to their energy profiles.  In particular, it is assumed that ${\rm U}_M$ is the most energy constrained, e.g., an Internet of Things (IoT) sensor, whereas ${\rm U}_1$ is the least energy constrained, e.g., a device with a reliable energy supply. 
 
   By relying on OFDMA only, the power consumption of ${\rm U}_m$'s uplink transmission is given by $P^{\rm OMA}_m = \frac{e^R-1}{h_{m,m}}$, where  $h_{m,m}=|\tilde{h}_{m,m}|^2$ and  $\tilde{h}_{m,i}$ denotes 
 ${\rm U}_m$'s channel gain on ${\rm F}_i$. In this paper, the users' channel gains on different subcarriers are assumed to be independent and identically distributed (i.i.d.) complex Gaussian random variables with zero means and unit variances. 
 
     \begin{figure*}[t]\centering \vspace{-0em}
    \epsfig{file=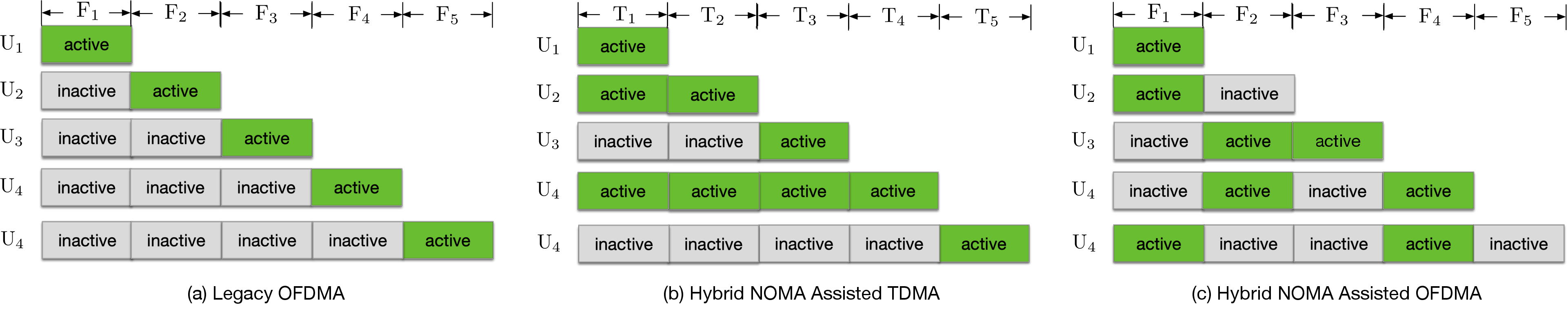, width=0.95\textwidth, clip=}\vspace{-0.5em}
\caption{An illustration of  the hybrid NOMA assisted    OFDMA  network. Compared to the TDMA case studied in  \cite{9679390, hnomadown}, the patten of resource   allocation in the OFDMA case is more complex. For example, in TDMA, a user chooses to use either all the time slots (i.e., the hybrid NOMA mode) or just its own TDMA time slot (i.e., pure OMA), where ${\rm T}_m$ denotes the $m$-th time slot. But in OFDMA, a user, e.g., ${\rm U}_2$, may    adopt the pure NOMA mode, and the time slots a user chooses to use are necessarily continuous.      \vspace{-1em}    }\label{fig0}   \vspace{-0.2em} 
\end{figure*}

The aim of this paper is to reduce the OFDMA users' uplink power consumption by applying hybrid NOMA,   as shown in Fig. \ref{fig0}.  
 In particular, each user can still use the subcarrier that is allocated to it in the  OFDMA legacy system, but a more energy constrained user is provided the access to more subcarriers. For example,    ${\rm U}_m$ can still use its own OFDMA subcarrier,  ${\rm F}_m$, but also has access to    ${\rm F}_i$, $1\leq i \leq m$.  As such, the most energy constrained user, ${\rm U}_M$, can have access to all subcarriers, whereas the least energy constrained user, ${\rm U}_1$, can use a single subcarrier only. %We note that there are various subcarrier allocation schemes which might yield better performance but result in more system complexity, compared to the scheme shown in Fig. \ref{xxx}. 

Therefore, at   subcarrier ${\rm F}_n$, the base station receives the signals simultaneously transmitted by ${\rm U}_i$, $n\leq i \leq M$. The base station  carries out successive interference cancellation (SIC)  by employing  the users' diverse energy profiles, e.g.,  at   subcarrier ${\rm F}_n$, ${\rm U}_M$'s signal is decoded first and ${\rm U}_1$'s signal is decoded last. With the used SIC decoding strategy, and assuming additive white Gaussian noise (AWGN),   ${\rm U}_m$'s signal on  ${\rm F}_n$ can be decoded with the following achievable data rate:
\begin{align}
R_{m,n} =\log\left(
1+
\frac{h_{m,n}P_{m,n}}{\sum^{m-1}_{j=n}h_{j,n}P_{j,n}+1}
\right),
\end{align} 
where $P_{m,n}$ denotes ${\rm U}_m$'s transmit power on  ${\rm F}_n$, and the noise power is assumed to be normalized. 
It is worthy  pointing  out that with the used SIC strategy, ${\rm U}_m$'s achievable data rate on ${\rm F}_m$ is simply $\log\left(
1+
 {h_{m,m}P_{m,m}} 
\right)$, which is identical to the user's data rate in OFDMA. 

Because of the energy constrained nature of the considered uplink scenario, the following power  minimization problem is considered in this paper:
  \begin{problem}\label{pb:1} 
  \begin{alignat}{2}
\underset{P_{m,n}\geq0   }{\rm{min}} &\quad   \sum^{M}_{m=1}  \sum^{m}_{n=1}P_{m,n}\label{1tst:1}
\\ s.t. &\quad  \sum^{m}_{n=1} R_{m,n}\geq R, \quad 1\leq m \leq M.  \label{1tst:2}  
  \end{alignat}
\end{problem}  

{\it Remark 1:} Similar to the hybrid NOMA schemes considered in  \cite{9679390, hnomadown}, problem \eqref{pb:1} leads to a general transmission framework, where the modes of pure OMA and pure NOMA can be obtained as   special cases by adopting different choices of the power allocation coefficients, $P_{m,n}$. 

{\it Remark 2:} The optimal solution of problem \eqref{pb:1} leads to a more complex resource allocation patten than its counterpart in the TDMA case, as illustrated in Fig. \ref{fig0}. For example, in H-NOMA-TDMA, the pure NOMA mode is always suboptimal, and  a user always uses the bandwidth resource block (i.e, the time slot) that is assigned to it  in OMA. However, in H-NOMA-OFDMA, the use of pure NOMA may outperform both hybrid NOMA and pure OMA. Another example is that for the uplink TDMA case with equal-duration time slots, i.e., the  time slot durations are not optimized, the use of pure OMA is always optimal.  In the OFDMA case, the pure OMA solution is rarely optimal,  particularly if there is a large number of users, as will  be shown in the next section. 

{\it Remark 3:} The fact that  a user's channel gains on different subcarriers are different, i.e., for ${\rm U}_m$, $h_{m,1}\neq \cdots \neq h_{m,m}$, leads to the   availability of the frequency diversity \cite{Rappaport}. The proposed hybrid NOMA scheme can effectively utilize the frequency diversity inherent in the OFDMA legacy network, and ensures that a user that is  more energy constrained  consumes less energy   for its uplink transmission, as shown in the next section. It is important to note that in TDMA, there is a dilemma   associated with using such diversity as noted in the introduction.  In particular, it is preferable to exploit such diversity as it can significantly reduce energy consumption. However, the use of such diversity requires the base station to know the users' future CSI, which is challenging in practice.  

%There is also an important difference between problem \eqref{pb:1} and those  TDMA formulations shown in [xxx]. In particular, in TDMA, the users' channel gains are assumed to be constant

%  The most energy constrained user has access to all the subcarriers, which can ensure that the user uses less energy consumption for its uplink transmission. This can be illustrated by an extreme example, if ${\rm U}_m$ experiences deep fading on  ${\rm F}_m$. 

\section{Performance Analysis: An Optimization Perspective }
In this section, the properties   of H-NOMA-OFDMA transmission are analyzed from an optimization perspective. 
In particular,   tools from   convex optimization   are used to characterize  the important features of  the optimal solution of problem \eqref{pb:1}, and address the issues of whether  hybrid NOMA can outperform pure OMA and pure NOMA.

%reveal the important  features of hybrid NOMA transmission in OFDMA systems,   where the users' channel gains are treated as constants, instead of random variables as in the next section.   

\subsection{Optimality of Pure OMA}\label{subIIB}
It is straightforward to show that problem \eqref{pb:1} is not a convex optimization problem since the constraints in \eqref{1tst:1} are not convex.  Therefore, it is challenging to obtain a closed-form expression for the optimal solution of problem \eqref{pb:1}. However, it is important to point out that the   Karush–Kuhn–Tucker (KKT) conditions can still  be used as an optimality certificate for the   solutions of this non-convex optimization  problem \cite{Boyd}. By using this fact,   insights   into H-NOMA-OFDMA transmission can be obtained, as shown in the following lemma. 
\begin{lemma}\label{lemma1}
A necessary condition for the pure OMA solution, i.e., $P_{m,m} = \frac{e^R-1}{h_{m,m}}$ and $P_{m,n}=0$ for $m\neq n$,  to be the optimal solution of problem \eqref{pb:1} is given by
\begin{align}
h_{m,n}\leq h_{m,m}, \quad 1\leq n \leq m-1, 1\leq m \leq M. 
\end{align}
\end{lemma}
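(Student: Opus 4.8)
The plan is to use the KKT conditions as an optimality certificate, exactly as suggested in the paragraph preceding the lemma. Even though problem \eqref{pb:1} is non-convex, the KKT conditions remain necessary for (local) optimality under a standard constraint qualification, so any consequence derived from KKT at the pure OMA point is a necessary condition for its optimality. I would first form the Lagrangian
\begin{equation}
L = \sum_{m=1}^{M}\sum_{n=1}^{m} P_{m,n} + \sum_{m=1}^{M}\lambda_m\Big(R - \sum_{n=1}^{m} R_{m,n}\Big) - \sum_{m=1}^{M}\sum_{n=1}^{m}\mu_{m,n}P_{m,n},
\end{equation}
with multipliers $\lambda_m\geq0$ for the rate constraints \eqref{1tst:2} and $\mu_{m,n}\geq0$ for the nonnegativity constraints, and write the stationarity condition $\partial L/\partial P_{k,n}=0$. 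The only subtlety in setting this up is the bookkeeping: writing $I_{m,n}=\sum_{j=n}^{m-1}h_{j,n}P_{j,n}$, the variable $P_{k,n}$ enters not only its own rate $R_{k,n}$ but also every higher-index rate $R_{m,n}$ with $m>k$ through the interference $I_{m,n}$, so stationarity reads $1 - \lambda_k \frac{\partial R_{k,n}}{\partial P_{k,n}} - \sum_{m=k+1}^{M}\lambda_m\frac{\partial R_{m,n}}{\partial P_{k,n}} - \mu_{k,n}=0$.

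Next I would evaluate everything at the pure OMA point. The key simplification is that on subcarrier ${\rm F}_n$ only ${\rm U}_n$ transmits, with $h_{n,n}P_{n,n}=e^R-1$, so every higher-index user $m>n$ sees interference exactly $I_{m,n}=e^R-1$, while $I_{n,n}=0$. Two facts then follow. First, for the diagonal variables $(k,n)=(m,m)$, which are strictly positive so that $\mu_{m,m}=0$ by complementary slackness, a direct computation gives $\frac{\partial R_{m,m}}{\partial P_{m,m}} = \frac{h_{m,m}}{1+h_{m,m}P_{m,m}} = \frac{h_{m,m}}{e^R}$, while the interference derivatives $\frac{\partial R_{m',m}}{\partial P_{m,m}}$ for $m'>m$ all vanish because each carries a factor of $P_{m',m}$, which is zero at pure OMA. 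The diagonal stationarity condition therefore collapses to $\lambda_m \frac{h_{m,m}}{e^R}=1$, pinning down $\lambda_m = \frac{e^R}{h_{m,m}}$.

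Then I would turn to the off-diagonal variables $P_{k,n}$ with $n<k$, which are zero at pure OMA so that only $\mu_{k,n}\geq0$ is available. The same vanishing argument kills every interference derivative $\frac{\partial R_{m,n}}{\partial P_{k,n}}$ for $m>k$ (again because $P_{m,n}=0$ there), leaving only the own-rate derivative $\frac{\partial R_{k,n}}{\partial P_{k,n}} = \frac{h_{k,n}}{I_{k,n}+1} = \frac{h_{k,n}}{e^R}$. Combining stationarity with dual feasibility gives $\mu_{k,n}=1-\lambda_k\frac{h_{k,n}}{e^R}\geq0$; substituting the value $\lambda_k=\frac{e^R}{h_{k,k}}$ found above yields $\frac{h_{k,n}}{h_{k,k}}\leq1$, i.e. $h_{k,n}\leq h_{k,k}$ for every $n<k$. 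Relabeling $k\to m$ then gives the claimed necessary condition.

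I expect the main obstacle to be the interference bookkeeping in the stationarity condition, namely correctly identifying that $P_{k,n}$ couples into all the higher-index rates and then verifying that each such coupling derivative vanishes at the pure OMA point. Once that vanishing is established, the KKT system decouples subcarrier by subcarrier, and the remaining determination of $\lambda_m$ together with the final inequality is routine.
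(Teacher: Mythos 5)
Your proposal is correct and follows essentially the same route as the paper's proof in Appendix A: form the Lagrangian, write the KKT stationarity and complementary slackness conditions, observe that the cross-interference derivatives $\partial R_{m,n}/\partial P_{k,n}$ for $m>k$ vanish at the pure OMA point (the paper exhibits this as an exact cancellation of two sums with equal denominators $1+h_{n,n}P_{n,n}$), extract $\lambda_m = e^R/h_{m,m}$ from the diagonal stationarity, and obtain $h_{m,n}\le h_{m,m}$ from dual feasibility of the nonnegativity multipliers. The only differences are notational ($\mu_{m,n}$ versus the paper's $\lambda_{m,n}$) and that you attach the correct index $\lambda_m$ to the higher-order rate terms where the paper's displayed sums appear to carry a typographical slip that is immaterial because those terms vanish anyway.
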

\begin{proof}
See Appendix \ref{proof1}.
 \end{proof}
 
The use of Lemma \ref{lemma1} immediately leads to the following corollary regarding the optimality of    the pure OMA solution   in practice. 
\begin{corollary}\label{corollary1}
Assuming that the users' channel gains at different subcarriers are i.i.d., the probability that the pure OMA mode   is optimal approaches  zero if $M\rightarrow \infty$.   
\end{corollary}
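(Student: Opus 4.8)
The plan is to invoke Lemma~\ref{lemma1} and turn its necessary condition into a vanishing probability bound. Since pure OMA being optimal \emph{implies} that the inequalities $h_{m,n}\leq h_{m,m}$ hold for every $1\leq n\leq m-1$ and every $1\leq m\leq M$, the event ``pure OMA is optimal'' is contained in the event that this entire system of inequalities holds. Hence the probability of the former is upper bounded by the probability of the latter, and it suffices to show that this upper bound tends to zero.

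First I would isolate the contribution of each user. For a fixed $m$, the inequalities $h_{m,n}\leq h_{m,m}$ for all $n<m$ are equivalent to demanding that $h_{m,m}$ be the maximum of the $m$ gains $h_{m,1},\ldots,h_{m,m}$. Under the model, $h_{m,n}=|\tilde{h}_{m,n}|^2$ are i.i.d.\ (exponentially distributed, hence continuous) across the subcarrier index $n$, so ties have probability zero and all $m!$ orderings of these gains are equally likely. Consequently the probability that the specific index $n=m$ attains the maximum is exactly $\frac{1}{m}$; for $m=1$ the condition is vacuous and this evaluates to $1$, as it should.

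Next I would combine the per-user events using independence. The event associated with user $m$ depends only on user $m$'s own channel gains, and since the channels of distinct users are independent, these events are mutually independent. Multiplying the per-user probabilities gives
\begin{align}
\Pr(\text{pure OMA optimal})\leq \prod_{m=1}^{M}\frac{1}{m}=\frac{1}{M!},
\end{align}
which vanishes as $M\to\infty$, establishing the claim. The argument is short, so there is no serious analytical hurdle; the only points meriting care are the logical direction of Lemma~\ref{lemma1} (it is necessary, not sufficient, so it yields an upper bound on the probability of interest — exactly what a ``probability approaches zero'' statement requires), together with the verification that within each user the $m$ gains are exchangeable and that across users the events are independent, both of which follow immediately from the i.i.d.\ assumption on the channel model.
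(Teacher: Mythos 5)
Your proposal is correct and follows the same overall strategy as the paper: invoke Lemma~\ref{lemma1}, observe that the event ``pure OMA is optimal'' is contained in the event that the necessary condition holds, and show the latter's probability vanishes. However, the key computational step differs, and yours is in fact the rigorous version. The paper factorizes the per-user event as $\mathbb{P}\left(h_{m,n}\leq h_{m,m},\ 1\leq n\leq m-1\right)=\prod_{n=1}^{m-1}\mathbb{P}\left(h_{m,n}\leq h_{m,m}\right)=2^{-(m-1)}$, but these $m-1$ events all share the random variable $h_{m,m}$ and are therefore positively dependent, so this factorization is not valid for $m\geq 3$. The correct value, which you obtain via exchangeability of the $m$ continuous i.i.d.\ gains, is $\mathbb{P}\left(h_{m,m}=\max_{1\leq n\leq m}h_{m,n}\right)=\frac{1}{m}$, yielding the exact bound $\frac{1}{M!}$ rather than the paper's $2^{-M(M-1)/2}$. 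Both quantities tend to zero, so the corollary is unaffected, but your computation is the one that is actually exact; you also correctly make explicit that the necessary condition only yields an upper bound (an inequality the paper writes as an equality). The independence across distinct users, which justifies the product over $m$, holds in both arguments since different users' gains are disjoint collections of independent variables.
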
  
\begin{proof}
Lemma \ref{lemma1} shows that a necessary condition for the pure OMA solution to be optimal is given by $h_{m,n}\leq h_{m,m}$, for $1\leq n \leq m-1$ and $ 1\leq m \leq M$. Because the users' channel gains are assumed to be i.i.d., the probability of the event  that this condition holds is given by
\begin{align}\nonumber 
\mathbb{P}^{\rm OMA} = &\mathbb{P}\left(
h_{m,n}\leq h_{m,m}, \quad 1\leq n \leq m-1, 1\leq m \leq M
\right)\\\nonumber 
= &\prod^{M}_{m=2}\prod^{m-1}_{n=1} \mathbb{P}\left(
h_{m,n}\leq h_{m,m}  
\right)  \\  
= &\prod^{M}_{m=2}\frac{1}{2^{m-1}} = \frac{1}{2^{\frac{M(M-1)}{2}}}\rightarrow 0,
\end{align}
if $M\rightarrow \infty$. The proof of the corollary is complete. 
\end{proof}
{\it Remark 4:} Corollary \ref{corollary1} illustrates an important difference between the applications of hybrid NOMA to   TDMA and OFDMA networks. Recall that   for the uplink TDMA case, if the time slot durations are not optimized, pure OMA outperforms both hybrid NOMA and pure NOMA \cite{9679390, hnomadown}. However, for the uplink OFDMA case, Corollary \ref{corollary1} demonstrates that the NOMA schemes outperform pure OMA, particularly for   situations in which the number of users is very large.  
%{\it Remark 5:} Lemma \ref{lemma1} might lead to a misunderstanding that the use of appropriate scheduling can ensure  the optimality of the pure OMA mode, which is not true, as explained in the following. In particular,  to meet the conditions shown in Lemma \ref{lemma1}, the users' preconfigured energy profiles need to be matched with their random channel conditions, which is unlikely to happen in practice.   Take a simple three-user case as an example, where the users' channel gains are given by $\begin{bmatrix}
%1 & 2 &3\\ 3 & 4&5 \\1& 3&5
%\end{bmatrix}$, where ${\rm U}_i$'s channel gain on ${\rm F}_j$ is the element on the $i$-th row and $j$-th column of this channel matrix. In this example,  each user's channel gain on the third subcarrier is the strongest, i.e., all users want to use  ${\rm F}_3$, which is not possible in the OMA case.%  Therefore, the condition in   Lemma \ref{lemma1} cannot be satisfied, i.e., NOMA will outperform OMA. 

 \subsection{Optimality of Pure NOMA} 
As noted in the introduction, in studies of hybrid NOMA assisted  TDMA, pure NOMA has been shown to be suboptimal  \cite{9679390, hnomadown}. However, in the context of OFDMA, the use of the pure NOMA mode can be optimal, as shown in the following. Due to the complex and random patterns of the hybrid NOMA subcarrier allocation, a closed-form expression for the pure NOMA solution cannot be obtained. For example, consider a particular pure NOMA solution,  where all the users use ${\rm F}_1$ only, i.e.,  $P_{m,1}\neq 0$, for $1\leq m \leq M$, and the users' remaining power allocation coefficients are zero. For this particular pure NOMA case, problem \ref{pb:1} can be recast as follows:
  \begin{problem}\label{pb:2} 
  \begin{alignat}{2}
\underset{P_{m,n}\geq0   }{\rm{min}} &\quad   \sum^{M}_{m=1} P_{m,1} 
\\ s.t. &\quad \log\left(
1+
\frac{h_{m,1}P_{m,1}}{\sum^{m-1}_{j=1}h_{j,1}P_{j,1}+1}
\right) \geq R,  1\leq m \leq M\nonumber ,
  \end{alignat}
\end{problem}  
which is equivalent to the following form: 
  \begin{problem}\label{pb:3} 
  \begin{alignat}{2}  
\underset{P_{m,n}\geq0   }{\rm{min}} &\quad   \sum^{M}_{m=1} P_{m,1} 
\\ s.t. &\quad \nonumber
 h_{m,1}P_{m,1} 
  \geq \phi \sum^{m-1}_{j=1}h_{j,1}P_{j,1}+\phi ,  1\leq m \leq M   ,
  \end{alignat}
\end{problem}  
where $\phi=e^R-1$.  Problem \eqref{pb:3} is a linear program that  can be solved numerically, but it is challenging to obtain a closed form expression for its optimal solution.   Therefore, in the following, we consider the case  in which   the first $M-1$ users choose the OMA mode, and the last user ${\rm U}_M$ chooses the pure NOMA mode, i.e., its dedicated OMA subcarrier is not used. Because  the first $M-1$ users choose the OMA mode, problem \eqref{pb:1} can be simplified as follows:
  \begin{problem}\label{pb:4} 
  \begin{alignat}{2}
\underset{1\leq n \leq M-1 }{\text{arg } }\text{min} &\quad    P_{M,n}\label{4tst:1}
\\ s.t. &\quad   R_{M,n}\geq R.  \label{4tst:2}  
  \end{alignat}
\end{problem}  
Recall that $R_{M,n} =\log\left(
1+
\frac{h_{M,n}P_{M,n}}{\sum^{M-1}_{j=n}h_{j,n}P_{j,n}+1}
\right)$. If the first $M-1$ users choose the OMA mode, $R_{M,n}$ can be simplified as follows:
\begin{align}
R_{M,n} =\log\left(
1+
\frac{h_{M,n}P_{M,n}}{ h_{n,n}P_{n,n}+1}
\right)=\log\left(
1+
  \frac{h_{M,n}P_{M,n} }{e^R}
\right).
\end{align}
%where the last step follows from the fact that the first $M-1$ users choose the OMA mode, and hence $\log( h_{n,n}P_{n,n}+1)=R$ for $1\leq n\leq M-1$. 

Because ${\rm U}_M$ suffers the same amount of interference during the first $M-1$ subcarriers, i.e., $h_{n,n}P_{n,n}+1=e^{R}$, for $1\leq n\leq M-1$, the optimal solution of problem \eqref{pb:4} is simply $n^*=\underset{1\leq n \leq M-1 }{\text{arg max} h_{M,n} }$, i.e., the user chooses the best subcarrier.  Without loss of generality, assume that $h_{M,1}\geq h_{M,i}$, $2\leq i\leq M$, which means  $n^*=1$ and the corresponding pure NOMA solution is $P_{m,m}=\frac{e^R-1}{h_{m,m}}$, $1\leq m \leq M-1$, $P_{M,1}=\frac{e^R(e^R-1)}{h_{M,1}}$, and the other power coefficients are zero. 

 \begin{lemma}\label{lemma2}
 A necessary conditions for the considered pure NOMA solution to be optimal to problem \eqref{pb:1} are give by 
  \begin{align}\label{lemma2x}
\left\{\begin{array}{l}  
 \frac{  h_{m,n }  }{ h_{m,m } }\leq 1
 , \quad 1\leq n\leq m \leq M-1 \\
   \frac{ h_{M,n } }{ h_{M,1 } }\leq 1
 ,  \quad 2\leq n\leq M-1\\
 e^R  h_{M,M}\leq h_{M,1 }
 \end{array}\right.. 
 \end{align}
 \end{lemma}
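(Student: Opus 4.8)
The plan is to mirror the KKT-based argument used for Lemma~\ref{lemma1}: since problem \eqref{pb:1} is non-convex, the KKT conditions serve only as an optimality certificate, and I would show that the stated inequalities are exactly the conditions under which the proposed pure NOMA point can satisfy those conditions. First I would verify that the candidate is feasible and that every rate constraint in \eqref{1tst:2} is active, each user attaining rate exactly $R$; this is immediate from the construction, since $h_{m,m}P_{m,m}=e^R-1$ for $m\le M-1$ and, on ${\rm F}_1$, user ${\rm U}_M$ sees interference-plus-noise $h_{1,1}P_{1,1}+1=e^R$ so that $h_{M,1}P_{M,1}=e^R(e^R-1)$ yields rate $R$.

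Next I would form the Lagrangian with multipliers $\lambda_m\ge 0$ for the rate constraints and $\mu_{m,n}\ge 0$ for the nonnegativity constraints, and write the stationarity condition $\partial_{P_{k,l}}L=0$. The only derivatives needed are the signal derivative $\partial R_{m,n}/\partial P_{m,n}=h_{m,n}/(S_{m,n}+h_{m,n}P_{m,n})$ and the (negative) interference derivative of a victim rate with respect to an interferer's power, where $S_{m,n}=\sum_{j=n}^{m-1}h_{j,n}P_{j,n}+1$. The key step is to evaluate these at the candidate: I expect $S_{m,m}=1$, $S_{m,n}=e^R$ whenever a single OMA user occupies the lower subcarrier, and $S_{M,1}+h_{M,1}P_{M,1}=e^{2R}$. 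Imposing stationarity at the active variables $P_{m,m}$ ($m\le M-1$) and $P_{M,1}$ then pins down the multipliers, giving $\lambda_m=e^R/h_{m,m}$ and fixing $\lambda_M$, with $\lambda_1$ coupled to $\lambda_M$ through the interference ${\rm U}_1$ causes ${\rm U}_M$ on ${\rm F}_1$.

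With the multipliers in hand, each inactive variable contributes one inequality through $\mu_{k,l}=1-\sum_m\lambda_m\,\partial_{P_{k,l}}(\sum_n R_{m,n})\ge 0$. The off-diagonal powers $P_{m,n}$ of the OMA users ($n<m\le M-1$) yield the first family $h_{m,n}\le h_{m,m}$; the unused subcarriers of ${\rm U}_M$ with index $2\le n\le M-1$ yield the second family; and the unused own-subcarrier $P_{M,M}$ yields the third, $e^R h_{M,M}\le h_{M,1}$. For the conditions attached to ${\rm U}_M$ it is cleaner to argue directly by a finite power comparison---${\rm U}_M$'s single-subcarrier power $e^R(e^R-1)/h_{M,1}$ on ${\rm F}_1$ must not exceed the power $e^R(e^R-1)/h_{M,n}$ it would spend on any ${\rm F}_n$, nor the OMA power $(e^R-1)/h_{M,M}$ on ${\rm F}_M$---which reproduces \eqref{lemma2x} without tracking the multipliers.

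I expect the main obstacle to be the interference coupling on ${\rm F}_1$: because every OMA user is a potential interferer for ${\rm U}_M$ on that subcarrier, the stationarity conditions for $P_{1,1}$, $P_{M,1}$ and the off-diagonal $P_{m,1}$ are interlinked, so the multipliers cannot be read off in isolation and the interference-derivative terms must be handled with care. A secondary point worth stating explicitly is that, since the marginal KKT inequalities for ${\rm U}_M$ are in fact slightly tighter than those in \eqref{lemma2x}, the listed conditions are genuinely necessary but not claimed to be the sharpest; the direct power-comparison route delivers precisely the stated forms.
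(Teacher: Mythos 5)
Your proposal follows essentially the same route as the paper's Appendix B: substitute the candidate point into the KKT system, solve the stationarity conditions at the active variables for the rate multipliers ($\lambda_m=e^R/h_{m,m}$ for the OMA users, with $\lambda_M$ and $\lambda_{M,M}$ obtained jointly from the $P_{M,1}$ and $P_{M,M}$ equations), and read the three families of inequalities off the nonnegativity of the multipliers attached to the inactive variables. Two points of divergence are worth recording. First, the paper does not engage with the interference coupling on ${\rm F}_1$ that you identify as the main obstacle: it simply reuses the simplified stationarity form $1-\lambda_{m,n}-\lambda_m h_{m,n}/(1+h_{n,n}P_{n,n})$ inherited from the pure-OMA analysis, which overlooks that at this candidate the signal-term denominator for $P_{M,1}$ is $e^{2R}$ rather than $e^{R}$, and that ${\rm U}_M$'s rate on ${\rm F}_1$ injects cross-terms (weighted by $\lambda_M$) into the stationarity conditions for $P_{1,1}$ and the inactive $P_{m,1}$, $2\le m\le M-1$; your remark that a careful treatment yields marginal inequalities for ${\rm U}_M$ that are tighter than the stated ones is correct and does not appear in the paper, and the same cross-terms in fact make the $n=1$ members of the first family weaker than $h_{m,1}\le h_{m,m}$, a subtlety neither the paper nor your sketch fully resolves. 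Second, your finite power-comparison argument for ${\rm U}_M$'s two conditions --- comparing $e^R(e^R-1)/h_{M,1}$ against $e^R(e^R-1)/h_{M,n}$ and against $(e^R-1)/h_{M,M}$ while freezing the other users --- is a genuinely different and more elementary derivation that lands exactly on the stated forms and is immune to the multiplier bookkeeping; the paper instead obtains those two conditions from $\lambda_{M,n}\ge 0$ and $\lambda_{M,M}\ge 0$.
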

 \begin{proof}
 See Appendix \ref{proof2}.
\end{proof}

 We note that the considered pure NOMA solution is obtained with the condition that 
 $h_{M,1}\geq h_{M,i}$, $2\leq i\leq M$, which is weaker than the one in  Lemma \ref{lemma2}, since $ e^R  h_{M,M}\leq h_{M,1 }$ implies $   h_{M,M}< e^R  h_{M,M}\leq h_{M,1 }$.
 
{\it Remark 5:} Lemma \ref{lemma2} shows that it is possible for a user to adopt the pure NOMA model in H-NOMA-OFDMA. Lemma \ref{lemma2} is also useful to illustrate another important difference between  hybrid NOMA assisted OFDMA and TDMA. For the TDMA case, a user's channel does not change over the $M$ time slots, and therefore,  Lemma \ref{lemma2} is still applicable, but with the changes that $h_{m,1}=\cdots =h_{m,m}$. Therefore, the   conditions in \eqref{lemma2x} can be simplified as follows:
 \begin{align}\label{lemma2x}
\left\{\begin{array}{l}  
1\leq 1
 , \quad 1\leq n\leq m \leq M-1 \\
 1\leq 1
 ,  \quad 2\leq n\leq M-1\\
 e^R  \leq 1
 \end{array}\right..
 \end{align}
Because the last condition can never be met,   a user will never choose the pure NOMA mode in the TDMA case, a conclusion   consistent to those made in \cite{9679390, hnomadown}. 

\subsection{Successive Resource Allocation}
Similar to \cite{9679390, hnomadown}, a low-complexity   algorithm can be developed to find  a sup-optimal solution of problem \eqref{pb:1}, by utilizing the successive nature of the SIC decoding strategy at   the base station. In particular, the proposed successive resource allocation algorithm consists of $M$ stages, where during the $m$-th stage, the following optimization problem is solved:
  \begin{problem}\label{pb:6} 
  \begin{alignat}{2}
\underset{P_{m,n}\geq0   }{\rm{min}} &\quad    \sum^{m}_{n=1}P_{m,n}\label{6tst:1}
\\ s.t. &\quad  \sum^{m}_{n=1} \log\left(
1+
\frac{h_{m,n}P_{m,n}}{\sum^{m-1}_{j=n}h_{j,n}P_{j,n}+1}
\right)\geq R,   \label{6tst:2}  
  \end{alignat}
\end{problem}  
where the parameters, $P_{i,n}$, $i<m$, are obtained from the previous stages. Because $P_{i,n}$, $i<m$, are fixed, it is straightforward to show that problem \eqref{pb:6} is a convex optimization problem, and hence can be efficiently solved by those off-shelf optimization solvers. 

Unlike the hybrid NOMA cases in \cite{9679390, hnomadown}, a closed form expression for  the successive resource allocation solution is challenging to obtain in the general case. However, insightful understandings can be obtained for the two-user special case. In particular, 
the optimality of the proposed successive resource allocation algorithm in the two-user special case is shown   in the following lemma.
\begin{lemma}\label{lemma3}
For the two-user special case, the solution obtained by the proposed successive resource allocation algorithm is optimal for problem \eqref{pb:1}.
\end{lemma}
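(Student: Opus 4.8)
The plan is to exploit a decoupling structure that is special to $M=2$: the least energy constrained user ${\rm U}_1$ occupies the single subcarrier ${\rm F}_1$, and its transmit power $P_{1,1}$ enters problem \eqref{pb:1} in only two places, namely ${\rm U}_1$'s own rate constraint and, through the interference term $h_{1,1}P_{1,1}+1$, the rate $R_{2,1}$ of ${\rm U}_2$. The crux is that both of these dependencies push $P_{1,1}$ in the same direction, which is exactly what makes the greedy successive allocation globally optimal. First I would write problem \eqref{pb:1} explicitly for $M=2$: the variables are $P_{1,1},P_{2,1},P_{2,2}$, the objective is $P_{1,1}+P_{2,1}+P_{2,2}$, and the constraints reduce to $h_{1,1}P_{1,1}\ge \phi$ for ${\rm U}_1$ and $R_{2,1}+R_{2,2}\ge R$ for ${\rm U}_2$, where $R_{2,1}=\log\!\left(1+\frac{h_{2,1}P_{2,1}}{h_{1,1}P_{1,1}+1}\right)$ and $R_{2,2}=\log(1+h_{2,2}P_{2,2})$.

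The key step is to show that at every optimal solution ${\rm U}_1$'s constraint is active, i.e. $P_{1,1}=\frac{\phi}{h_{1,1}}$, which is precisely the stage-one output of the successive algorithm. I would argue this by an exchange/monotonicity argument. Suppose for contradiction that $P_{1,1}>\frac{\phi}{h_{1,1}}$ at some optimum. Lowering $P_{1,1}$ towards $\frac{\phi}{h_{1,1}}$ then (i) keeps ${\rm U}_1$'s own constraint $h_{1,1}P_{1,1}\ge\phi$ satisfied, (ii) strictly decreases the objective, and crucially (iii) decreases the interference $h_{1,1}P_{1,1}+1$ seen by ${\rm U}_2$ on ${\rm F}_1$, hence increases $R_{2,1}$ and relaxes ${\rm U}_2$'s constraint while leaving $R_{2,2}$ unchanged. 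The perturbed point is therefore still feasible but has a strictly smaller objective, contradicting optimality. Hence any optimal solution must pin $P_{1,1}=\frac{\phi}{h_{1,1}}$.

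With $P_{1,1}$ fixed at $\frac{\phi}{h_{1,1}}$, problem \eqref{pb:1} collapses to minimizing $P_{2,1}+P_{2,2}$ subject to $R_{2,1}+R_{2,2}\ge R$ with $P_{1,1}$ treated as a known constant, which is exactly problem \eqref{pb:6} at stage $m=2$. As already established in the text, this subproblem is convex, so the point returned by the solver in the second stage of the successive algorithm is its global minimizer. Combining the two stages, the successive solution pins $P_{1,1}=\frac{\phi}{h_{1,1}}$ and then globally minimizes the residual convex problem over $(P_{2,1},P_{2,2})$; by the argument of the previous paragraph this attains the global optimum of problem \eqref{pb:1}, completing the proof.

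The main obstacle I anticipate is making the exchange argument in the key step watertight, i.e. verifying that the direction of the interference coupling is genuinely favourable (lowering $P_{1,1}$ \emph{helps} ${\rm U}_2$ rather than hurting it) so that no trade-off arises and no complementary constraint is violated. This is the single place where the two-user structure is essential: for $M\ge 3$ the analogous reduction fails, because an upstream user's power can simultaneously relax one downstream rate constraint and tighten another, so the greedy stage-by-stage ordering need not coincide with the joint optimum.
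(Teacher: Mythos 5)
Your proof is correct and follows essentially the same route as the paper: both arguments rest on the observation that $P_{1,1}$ can be pushed down to its OMA minimum $\frac{e^R-1}{h_{1,1}}$ without hurting (indeed, only helping) ${\rm U}_2$'s rate constraint, after which the residual problem over $(P_{2,1},P_{2,2})$ is exactly the convex stage-two problem that the successive algorithm solves globally. The paper packages this as a contradiction via an inclusion of feasibility regions rather than your perturbation argument pinning $P_{1,1}$, but the substance is identical.
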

\begin{proof}
  This lemma can be proved  by the method of contradiction. Denote the optimal solutions of problem \eqref{pb:1} by $P_{1,1}^*$, $P_{2,1}^*$ and $P_{2,2}^*$, and the solutions of successive resource allocation by $P_{1,1}^{\rm SRA}$, $P_{2,1}^{\rm SRA}$ and $P_{2,2}^{\rm SRA}$, respectively. Assume that $P_{1,1}^*+P_{2,1}^*+P_{2,2}^*<P_{1,1}^{\rm SRA}+P_{2,1}^{\rm SRA}+P_{2,2}^{\rm SRA}$. 
  
  First note  that by  using   the proposed resource allocation algorithm, ${\rm U}_1$'s power allocation is simply its OMA choice, i.e., $P_{1,1}^{\rm SRA} = \frac{e^R-1}{h_{1,1}}$, which is the minimal transmit power for ${\rm U}_1$ to meet the target data rate $R$. Therefore, $P_{1,1}^{\rm SRA} \leq P_{1,1}^*$, which means that  $ P_{2,1}^*+P_{2,2}^*< P_{2,1}^{\rm SRA}+P_{2,2}^{\rm SRA}$ in order to ensure $P_{1,1}^*+P_{2,1}^*+P_{2,2}^*<P_{1,1}^{\rm SRA}+P_{2,1}^{\rm SRA}+P_{2,2}^{\rm SRA}$. However,   $P_{2,1}^{\rm SRA}$ and $P_{2,2}^{\rm SRA}$   are   optimal solutions of the following optimization problem:
  \begin{problem}\label{pb:7} 
  \begin{alignat}{2}
\underset{P_{2,n}\geq0   }{\rm{min}} &\quad    P_{2,1}+P_{2,2}\label{7tst:1}
\\ s.t. &\quad    \log\left(
1+
\frac{h_{2,1}P_{2,1}}{ h_{1,1}P_{1,1}^{\rm SRA}+1}
\right)+ \log\left(
1+
 h_{2,2}P_{2,2} 
\right)\geq R.   \nonumber  
  \end{alignat}
\end{problem}
Comparing problem \eqref{pb:7} to the following problem:
  \begin{problem}\label{pb:7dd} 
  \begin{alignat}{2}
\underset{P_{2,n}\geq0   }{\rm{min}} &\quad    P_{2,1}+P_{2,2}\label{7tst:1}
\\ s.t. &\quad    \log\left(
1+
\frac{h_{2,1}P_{2,1}}{ h_{1,1}P_{1,1}^{*}+1}
\right)+ \log\left(
1+
 h_{2,2}P_{2,2} 
\right)\geq R,   \nonumber  
  \end{alignat}
\end{problem}
one can notice that the feasibility region of problem \eqref{pb:7dd} is no larger than that of \eqref{pb:7} since $P_{1,1}^{\rm SRA}\leq P_{1,1}^{*}$. Therefore,   $ P_{2,1}^*+P_{2,2}^*< P_{2,1}^{\rm SRA}+P_{2,2}^{\rm SRA}$ cannot be true. The proof of the lemma is complete. 
\end{proof}

By using Lemma \ref{lemma3}, a closed-form expression for the optimal solution of problem \eqref{pb:1} can be obtained for the two-user special case, as shown in the following lemma. 

\begin{lemma}\label{lemma4}
For the two-user special case, an  optimal solution of problem \eqref{pb:1} can be obtained as follows:
 \begin{align}
\left\{\begin{array}{ll}
\text{Pure OMA mode:} &    \text{if } \frac{ h_{2,2}}{ h_{2,1}} \geq 1\\
\text{Pure NOMA mode:} &     \text{if }   \frac{ h_{2,2}}{ h_{2,1}}  \leq e^{-2R} \\
\text{Hybrid NOMA mode:}&   \text{if }   e^{-2R}\leq  \frac{ h_{2,2}}{ h_{2,1}}  < 1  \\
 \end{array}\right.
 \end{align}
 where $P_{1,1}^{\rm *}=  \frac{e^R-1}{h_{1,1}}$, for the hybrid NOMA   mode, $
P_{2,1}^{\rm *} =\frac{e^{R}}{ \sqrt{ h_{2,1}   h_{2,2}}}    -\frac{e^{R}}{ h_{2,1}}$ and $
P_{2,2}^{\rm *} =\frac{e^{R}}{ \sqrt{ h_{2,1}   h_{2,2}}}    -\frac{1}{ h_{2,2}}$, for the pure NOMA mode, 
$P_{2,1}^{\rm *}=\frac{e^R(e^{R}-1)}{h_{2,1}}$ and $P_{2,2}^{\rm *}=0$, for the pure OMA mode, 
$P_{2,1}^{\rm *}=0$ and $P_{2,2}^{\rm *} = \frac{e^{R}-1}{h_{2,2}}$.  
 
\end{lemma}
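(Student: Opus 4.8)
The plan is to lean entirely on Lemma \ref{lemma3}, which guarantees that the successive resource allocation solution is optimal in the two-user case; this reduces the lemma to solving the two stages of that algorithm in closed form. The first stage assigns ${\rm U}_1$ its OMA power, $P_{1,1}^{*}=\frac{e^R-1}{h_{1,1}}$, so that the interference ${\rm U}_2$ sees on ${\rm F}_1$ satisfies $h_{1,1}P_{1,1}^{*}+1=e^R$. Substituting this into the stage-two problem \eqref{pb:6} with $m=2$ turns ${\rm U}_2$'s constraint into the clean product form $\left(1+\frac{h_{2,1}}{e^R}P_{2,1}\right)\left(1+h_{2,2}P_{2,2}\right)\ge e^R$, while the objective stays the linear $P_{2,1}+P_{2,2}$. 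This convex program is what I actually have to solve.

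Next I would observe that the rate constraint must be active at the optimum, since any slack would permit a strict reduction of one of the powers. Hence I can eliminate $P_{2,2}$ through $1+h_{2,2}P_{2,2}=\frac{e^R}{1+\frac{h_{2,1}}{e^R}P_{2,1}}$ and obtain a single-variable objective $f(P_{2,1})$ on the interval $0\le P_{2,1}\le \frac{e^R(e^R-1)}{h_{2,1}}$, whose right endpoint is exactly the point at which $P_{2,2}$ reaches zero. A short computation shows $f$ is strictly convex, and setting $f'=0$ locates the unconstrained stationary point; back-substitution then yields the stated hybrid-mode expressions $P_{2,1}^{*}=\frac{e^{R}}{\sqrt{h_{2,1}h_{2,2}}}-\frac{e^{R}}{h_{2,1}}$ and $P_{2,2}^{*}=\frac{e^{R}}{\sqrt{h_{2,1}h_{2,2}}}-\frac{1}{h_{2,2}}$.

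The remaining and most delicate step is the three-way case split, which comes from testing whether this interior stationary point lies inside the feasible interval. Requiring $P_{2,1}^{*}\ge 0$ reduces, after clearing the square root, to $h_{2,1}\ge h_{2,2}$, i.e.\ $\frac{h_{2,2}}{h_{2,1}}\le 1$; requiring $P_{2,2}^{*}\ge 0$ reduces similarly to $\frac{h_{2,2}}{h_{2,1}}\ge e^{-2R}$. When $\frac{h_{2,2}}{h_{2,1}}\ge 1$ the stationary point sits at or below $P_{2,1}=0$, so by convexity $f$ is nondecreasing on the feasible interval and the minimizer is the left endpoint $P_{2,1}=0$, which is precisely pure OMA with $P_{2,2}^{*}=\frac{e^{R}-1}{h_{2,2}}$. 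Symmetrically, when $\frac{h_{2,2}}{h_{2,1}}\le e^{-2R}$ the stationary point lies at or beyond the right endpoint, $f$ is nonincreasing there, and the minimizer is $P_{2,1}=\frac{e^{R}(e^{R}-1)}{h_{2,1}}$ with $P_{2,2}=0$, i.e.\ pure NOMA; the intermediate band $e^{-2R}<\frac{h_{2,2}}{h_{2,1}}<1$ keeps the stationary point interior and returns hybrid NOMA.

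The genuine obstacle is not the stationarity calculation, which is routine, but making the boundary argument airtight: I expect to need the strict convexity of $f$ together with the explicit sign of $f'$ at the endpoints to argue rigorously that an infeasible interior minimizer forces the optimum onto the correct corner, rather than merely asserting it. Once that monotonicity-from-convexity argument is in place, the partition into three regimes and the two corner formulas follow immediately.
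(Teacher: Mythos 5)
Your proposal is correct, and it reaches the same three-regime partition and the same closed-form powers as the paper, but by a genuinely different route. Both arguments begin identically: invoke Lemma \ref{lemma3} to fix $P_{1,1}^{*}=\frac{e^R-1}{h_{1,1}}$ and reduce everything to the stage-two convex program \eqref{pb:9}. From there the paper works with the full Lagrangian of \eqref{pb:9}, keeping explicit multipliers $\lambda_2,\lambda_3$ for the nonnegativity constraints; for each candidate mode it solves the KKT system for the multipliers in closed form and reads off the optimality condition as dual feasibility (e.g., $\lambda_3=1-\frac{e^{2R}h_{2,2}}{h_{2,1}}\ge 0$ for pure NOMA, $\lambda_2=1-\frac{h_{2,1}}{h_{2,2}}\ge 0$ for pure OMA). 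You instead note that the rate constraint is active, eliminate $P_{2,2}$, and minimize a strictly convex single-variable function on the interval $\left[0,\frac{e^R(e^R-1)}{h_{2,1}}\right]$, with the case split determined by where the unconstrained stationary point falls relative to the endpoints; the sign conditions $P_{2,1}^{*}\ge 0$ and $P_{2,2}^{*}\ge 0$ reproduce exactly the thresholds $1$ and $e^{-2R}$. Your approach is more elementary and the endpoint argument you flag as the delicate step is in fact routine for a strictly convex function of one variable (an interior stationary point left of the interval forces $f'>0$ throughout it, and symmetrically on the right), so no real gap remains. What the paper's multiplier-based route buys is uniformity with the proofs of Lemmas \ref{lemma1} and \ref{lemma2}, where the same dual-feasibility test is applied in the general $M$-user setting where variable elimination is no longer tractable; what your route buys is a self-contained derivation that makes the convexity and the corner-versus-interior trichotomy completely transparent.
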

 \begin{proof}
 See Appendix \ref{proof4}. 
 \end{proof}
{\it Remark 6:}    Lemma \ref{lemma4}   demonstrates how challenging it can be to obtain closed-form expressions for optimal H-NOMA-OFDMA transmission, since even for the simple two-user special case, there are three possible expressions.

%\subsection{one feature} with difference channels, no waterfilling? It seems that the interference levels are no longer leveled. 

%\subsection{another feature} just use one subcarrier. 

%\subsection{avoid the similarity issue}
\section{Performance Analysis: A  Statistical Perspective}
 In the previous  section, the features of H-NOMA-OFDMA have been revealed from the optimization perspective, where    optimality conditions  have been established for different transmission modes. These results are useful to understand  under which conditions  hybrid NOMA outperforms the other transmission modes, particularly pure OMA. This section focuses on  the following question - how large can  the performance gain of hybrid NOMA over pure OMA   be?      In particular, two new criteria, namely the power outage probability and the power diversity gain, are introduced to quantitize  the performance gain of hybrid NOMA over OMA. 

Formally, the power outage probability is defined as the probability  that, in order to meet the target data rate, a user needs more transmit power than its power budget, denoted by $\rho$.  Take OMA transmission as an example, where the power outage probability for ${\rm U}_m$ is given by
\begin{align}
\mathbb{P}^{\rm out}_m =& \mathbb{P}\left(P_m^{\rm OMA}\geq \rho\right) = \mathbb{P}\left(\frac{e^R-1}{h_{m,m}}\geq \rho\right) \\\nonumber =& \mathbb{P}\left(\log\left(1+\rho h_{m,m}\right)\leq R\right),
\end{align}
which shows that for OMA, the power outage probability is equivalent to the standard  outage probability \cite{Zhengl03}. Similar to the conventional outage diversity gain    \cite{Laneman04}, ${\rm U}_m$'s  power diversity gain can be defined as follows:
\begin{align}
d = - \underset{\rho\rightarrow \infty}{\lim} \frac{\log \mathbb{P}_m^{\rm out}}{\log \rho}.
\end{align}
These two proposed   metrics are useful   for measuring  how the users' predefined  energy profiles can be met in dynamic  fading scenarios. For example, with a smaller power outage probability and a  larger power diversity gain, a communication  system is more robust to meet the users' predefined  energy profiles. 

Two case studies will be carried out in this section by using these  two performance evaluation metrics  to illustrate  the performance gain of hybrid NOMA over OMA.

\subsection{The Two-User Special Case}
We first consider the special case of two users, because  a closed-form expression for the optimal solution of problem \eqref{pb:1} is available in this case. Note that with H-NOMA-OFDMA, ${\rm U}_1$'s transmit power is the same as its OMA transmit power. Therefore, only ${\rm U}_2$'s   transmit power is considered  in the following, where the corresponding power outage probability is given by 
\begin{align}
\mathbb{P}^{\rm out}_2 = \mathbb{P}\left(P_{2,1}+P_{2,2}\geq\rho \right) .
\end{align}
The following lemma provides a high signal-to-noise ratio  (SNR) approximation of  ${\rm U}_2$'s power outage probability.

\begin{lemma} \label{lemma5}
At high SNR, i.e., $\rho \rightarrow \infty$,  ${\rm U}_2$'s power outage probability can be approximated as follows:
 \begin{align}
\mathbb{P}^{\rm out} \approx&         \frac{(e^{R}-1)^2}{\rho ^2}   +\frac{\tau}{\rho^2}  ,
\end{align}
where $\tau$ is a constant not related to $\rho$, $\tau =\int^{e^R}_{1}\frac{y-1}{\rho} \left(\min\left\{
e^{2R} ,g_2(y) 
\right\} -\max\left\{1 ,g_1(y)\right\}  
\right)dy $, $g_1(y) = \frac{2e^{2R}     -e^{R} y -2e^{R}\sqrt{e^{2R}  -e^{R}  y} }{ y^2}$, and $g_2(y) = \frac{2e^{2R}     -e^{R} y +2e^{R}\sqrt{e^{2R}  -e^{R}  y} }{ y^2}$
\end{lemma}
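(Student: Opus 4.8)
The plan is to combine the closed-form optimal power from Lemma~\ref{lemma4} with the joint statistics of the two channel gains. Write $X=h_{2,1}$ and $Y=h_{2,2}$, which are independent unit-mean exponential random variables with joint density $e^{-(x+y)}$ on the positive quadrant. By Lemma~\ref{lemma4} the total transmit power $P_{2,1}+P_{2,2}$ is a piecewise function of $(X,Y)$, with the branch selected by the ratio $Y/X$: it equals $\frac{e^R-1}{Y}$ on the pure-OMA region $\{Y\ge X\}$, equals $\frac{e^R(e^R-1)}{X}$ on the pure-NOMA region $\{Y\le e^{-2R}X\}$, and equals $\frac{2e^R}{\sqrt{XY}}-\frac{e^R}{X}-\frac1Y$ on the hybrid region $\{e^{-2R}X\le Y<X\}$. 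Accordingly I would split the outage event $\{P_{2,1}+P_{2,2}\ge\rho\}$ into these three disjoint sub-events and write $\mathbb{P}^{\rm out}_2$ as a sum of three integrals. The key simplification is that as $\rho\to\infty$ every one of these regions collapses toward the origin (both gains are forced to be $O(1/\rho)$), so $e^{-(x+y)}=1+o(1)$ there and the leading-order integrals reduce to Lebesgue areas; I would retain the leading term and argue the correction is $o(\rho^{-2})$.

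The pure-OMA and pure-NOMA contributions are elementary. In the OMA branch, outage requires $Y\le (e^R-1)/\rho$ together with $X\le Y$, a triangular wedge whose area gives $\frac{(e^R-1)^2}{2\rho^2}$ to leading order. In the NOMA branch, outage requires $X\le e^R(e^R-1)/\rho$ together with $Y\le e^{-2R}X$; the extra factor $e^{-2R}$ from the slope exactly cancels the $e^{2R}$ produced by squaring $e^R(e^R-1)$, again yielding $\frac{(e^R-1)^2}{2\rho^2}$. Summing the two reproduces the first term $\frac{(e^R-1)^2}{\rho^2}$ of the claimed expansion.

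The hybrid branch is the substantive part, because its outage boundary $\frac{2e^R}{\sqrt{XY}}-\frac{e^R}{X}-\frac1Y=\rho$ is a nonlinear curve rather than a line. The plan is to change variables to the ratio $\xi=X/Y$ and the scaled gain $\mu=1+\rho Y$; on the boundary the outage condition becomes $\mu\,\xi-2e^R\sqrt{\xi}+e^R=0$, a quadratic in $\sqrt\xi$ whose two solutions for $\xi$ are exactly $g_1(\mu)$ and $g_2(\mu)$, so that the outage set is the interval $g_1(\mu)\le\xi\le g_2(\mu)$ (real only for $\mu\le e^R$, which together with $\mu\ge1$ fixes the outer range $\mu\in[1,e^R]$). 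Intersecting this with the hybrid-mode constraint $1<\xi\le e^{2R}$ truncates the interval to $[\max\{1,g_1(\mu)\},\min\{e^{2R},g_2(\mu)\}]$, which is the origin of the $\min$ and $\max$ in $\tau$. With the Jacobian of $(X,Y)\mapsto(\mu,\xi)$ and the high-SNR replacement $e^{-(x+y)}\to1$, integrating the inner variable $\xi$ over this truncated interval and then $\mu$ over $[1,e^R]$ produces the second term $\frac{\tau}{\rho^2}$.

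The part I expect to require the most care is the hybrid region: correctly identifying which solution is $g_1$ and which is $g_2$, checking that the outage set lies \emph{between} the two roots (this follows because the power, viewed as a function of $\xi$ at fixed $Y$, is concave in $\xi^{-1/2}$ and therefore exceeds $\rho$ on an interval), and verifying the truncation to $1<\xi\le e^{2R}$ so that the OMA and NOMA boundaries are not double counted. I would close the argument with two consistency checks: each of the three integrals is $\Theta(\rho^{-2})$, so $\mathbb{P}^{\rm out}_2=\Theta(\rho^{-2})$, confirming a power diversity gain of two; and neglecting the exponential factor perturbs each integrand only by a multiplicative $1+O(\rho^{-1})$, which is absorbed into the $\approx$.
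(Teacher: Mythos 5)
Your proposal is correct and follows essentially the same route as the paper's Appendix D: the same three-way decomposition of the outage event by the optimal mode from Lemma \ref{lemma4}, the same reduction of the hybrid-mode boundary to a quadratic whose roots yield $g_1$ and $g_2$ (truncated by the mode condition $1\leq \xi\leq e^{2R}$), and the same high-SNR linearization giving $\frac{(e^R-1)^2}{2\rho^2}$ from each of the pure-OMA and pure-NOMA wedges plus the residual integral $\tau/\rho^2$. The only differences are cosmetic --- you use the ratio/scaled variables $(\xi,\mu)$ with a Jacobian and leading-order area arguments where the paper substitutes $x_i=\sqrt{h_{2,i}}$, evaluates the exact exponential integrals, and then Taylor-expands --- and both yield identical leading terms.
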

\begin{proof}
See Appendix \ref{proof5}.
\end{proof}
 By using Lemma \ref{lemma5}, the following corollary for ${\rm U}_2$'s power diversity order can be straightforwardly  obtained.
 \begin{corollary}\label{corollary2}
By using hybrid NOMA assisted OFDMA,  ${\rm U}_2$'s power diversity order is $2$, where the power diversity order of pure OFDMA is $1$. 
 \end{corollary}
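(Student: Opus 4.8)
The plan is to read off both diversity orders directly from the definition $d = -\lim_{\rho\to\infty} \frac{\log \mathbb{P}^{\rm out}_2}{\log\rho}$, once a high-SNR expression for $\mathbb{P}^{\rm out}_2$ is available in each mode. For the hybrid NOMA mode, Lemma~\ref{lemma5} already supplies everything that is needed, so the substantive analysis is contained there and the corollary itself is a short consequence. For the pure OFDMA mode, I would carry out the one-line outage calculation that is standard for a single exponentially distributed channel gain.

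For the hybrid NOMA case, I would substitute the approximation of Lemma~\ref{lemma5}, writing $\mathbb{P}^{\rm out}_2 \approx \frac{C}{\rho^2}$ with $C = (e^R-1)^2 + \tau$. Since $C$ collects only terms independent of $\rho$, taking logarithms gives $\log \mathbb{P}^{\rm out}_2 \approx \log C - 2\log\rho$, and hence
\begin{align}
-\lim_{\rho\to\infty}\frac{\log \mathbb{P}^{\rm out}_2}{\log\rho} = -\lim_{\rho\to\infty}\left(\frac{\log C}{\log\rho} - 2\right) = 2,
\end{align}
which yields the claimed power diversity order of $2$.

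For the pure OFDMA case, I would start from $\mathbb{P}^{\rm out}_2 = \mathbb{P}\left(\frac{e^R-1}{h_{2,2}} \geq \rho\right) = \mathbb{P}\left(h_{2,2} \leq \frac{e^R-1}{\rho}\right)$. Because $h_{2,2} = |\tilde{h}_{2,2}|^2$ is exponentially distributed with unit mean, this probability equals $1 - \exp\left(-\frac{e^R-1}{\rho}\right)$, which behaves as $\frac{e^R-1}{\rho}$ as $\rho\to\infty$. Feeding this into the definition of $d$ gives $-\lim_{\rho\to\infty}\frac{\log(e^R-1) - \log\rho}{\log\rho} = 1$, establishing the diversity order of $1$ for OMA.

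The genuine obstacle lies behind us in Lemma~\ref{lemma5}; the only point still requiring care is to confirm that $C = (e^R-1)^2 + \tau$ is a finite, strictly positive, $\rho$-independent constant, so that the leading behavior of $\mathbb{P}^{\rm out}_2$ is truly of order $\rho^{-2}$ rather than lower. Lemma~\ref{lemma5} guarantees that $\tau$ does not depend on $\rho$, and the $(e^R-1)^2$ term is manifestly positive, so no cancellation can occur and the limit defining $d$ is well posed. The factor-of-two gain in diversity order over OMA then follows, reflecting the frequency diversity harnessed by hybrid NOMA across the two subcarriers.
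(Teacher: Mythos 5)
Your proposal is correct and follows exactly the route the paper intends: the paper states Corollary \ref{corollary2} without a separate proof, presenting it as a direct consequence of Lemma \ref{lemma5} via the definition $d=-\lim_{\rho\to\infty}\log\mathbb{P}^{\rm out}/\log\rho$, which is precisely your computation, and your one-line exponential-tail calculation for the OMA baseline matches the paper's earlier observation that the OMA power outage probability reduces to the standard outage probability $\mathbb{P}(\log(1+\rho h_{2,2})\leq R)\approx (e^R-1)/\rho$. Your added remark that $(e^R-1)^2+\tau$ is a strictly positive, $\rho$-independent constant (so the $\rho^{-2}$ scaling is genuine) is a sensible point of care that the paper leaves implicit.
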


 {\it Remark 7:} Lemma \ref{lemma5} and  Corollary \ref{corollary2}  demonstrate that the use of hybrid NOMA can significantly reduce the OFDMA uplink energy consumption, and more robustly meet the users' stringent energy constraints.  The reason for this performance gain can be explained by the following extreme example. Suppose  that ${\rm U}_2$'s channel gain on ${\rm F}_2$ is in deep fading, i.e., $h_{2,2}\rightarrow 0$. The use of OMA leads to a singular  energy situation, i.e., the user needs to use infinite transmit power. However, by using hybrid NOMA, this singularity issue can be avoided, since  ${\rm U}_2$ can simply discard ${\rm F}_2$ and rely on ${\rm F}_1$ only.  
  Following   steps similar to those in the proof of Lemma \ref{lemma5} and using the fact that $h_{2,1}=h_{2,2}$ for the TDMA case, the following corollary can be obtained.  
  \begin{corollary}\label{corollary3}
By using hybrid NOMA assisted TDMA,  ${\rm U}_2$'s power diversity order is   $1$. 
 \end{corollary}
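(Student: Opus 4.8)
The plan is to exploit the degeneracy that TDMA imposes on the two-user power-allocation problem. In the TDMA setting the two resource blocks carry the same channel realization, so $h_{2,1}=h_{2,2}=:h$ and the ratio $\frac{h_{2,2}}{h_{2,1}}$ equals $1$ with probability one. First I would invoke Lemma \ref{lemma4}: since $\frac{h_{2,2}}{h_{2,1}}=1\geq 1$, the optimal mode is pure OMA and ${\rm U}_2$'s minimal total transmit power is $P_{2,1}^{*}+P_{2,2}^{*}=\frac{e^R-1}{h}$. As a consistency check one may substitute $h_{2,1}=h_{2,2}=h$ into the hybrid-NOMA total power of Lemma \ref{lemma4}, namely $\frac{2e^R}{\sqrt{h_{2,1}h_{2,2}}}-\frac{e^R}{h_{2,1}}-\frac{1}{h_{2,2}}$, which collapses to $\frac{e^R-1}{h}$; thus the two branches coincide on the boundary and the power-minimizing value is unambiguous.

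With the total power identified, I would compute the power outage probability directly, mirroring the outage step at the start of the proof of Lemma \ref{lemma5} but now in a single channel variable:
\begin{align}
\mathbb{P}^{\rm out}=\mathbb{P}\left(\frac{e^R-1}{h}\geq \rho\right)=\mathbb{P}\left(h\leq \frac{e^R-1}{\rho}\right).\nonumber
\end{align}
Because $h=|\tilde{h}_{2,2}|^2$ is the squared magnitude of a zero-mean unit-variance complex Gaussian, it is exponentially distributed with unit mean, so the right-hand side equals $1-\exp\!\left(-\frac{e^R-1}{\rho}\right)$. A first-order expansion of the exponential for $\rho\rightarrow\infty$ then gives $\mathbb{P}^{\rm out}\approx \frac{e^R-1}{\rho}$, and substituting into the diversity definition yields
\begin{align}
d=-\underset{\rho\rightarrow\infty}{\lim}\frac{\log \mathbb{P}^{\rm out}}{\log\rho}=1.\nonumber
\end{align}

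The contrast with Corollary \ref{corollary2} is then transparent: in the OFDMA case the two \emph{independent} gains $h_{2,1}$ and $h_{2,2}$ endow ${\rm U}_2$ with frequency diversity, so that an outage requires both subcarriers to fade and the probability decays as $\rho^{-2}$; once TDMA forces $h_{2,1}=h_{2,2}$, only a single fading variable remains and the decay rate drops to $\rho^{-1}$, with the second-order term $\frac{\tau}{\rho^2}$ of Lemma \ref{lemma5} eliminated.

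The step I expect to require the most care is the treatment of the boundary case $\frac{h_{2,2}}{h_{2,1}}=1$: one must confirm that the pure-OMA and hybrid-NOMA power expressions agree there, so that the outage event reduces cleanly to the single-variable threshold event $\{h\leq (e^R-1)/\rho\}$. Everything after that is the routine exponential-CDF expansion, and the qualitative conclusion that the leading term is $\Theta(\rho^{-1})$ rather than $\Theta(\rho^{-2})$ follows immediately.
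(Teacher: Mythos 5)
Your proposal is correct and follows essentially the same route the paper intends: the paper's one-line justification is to repeat the proof of Lemma \ref{lemma5} with $h_{2,1}=h_{2,2}$, under which the three-mode decomposition collapses to the pure-OMA branch of Lemma \ref{lemma4} (with the hybrid expression coinciding on the boundary, as you verify), leaving the single-variable event $\{h\leq (e^R-1)/\rho\}$ whose exponential tail gives $\mathbb{P}^{\rm out}\approx (e^R-1)/\rho$ and hence diversity order $1$. Your write-up is in fact more explicit than the paper's, but it contains no new idea beyond the intended argument.
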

 Corollary  \ref{corollary3} illustrates another difference between the applications of hybrid NOMA to TDMA and OFDMA systems, which is due to the    use of the users' dynamic channel conditions.  

 \subsection{A Multi-User Special Case}
 Because a closed-form expression for the optimal solution of problem \eqref{pb:1} is not available for the general multi-user case, we focus on the   pure NOMA solution considered in  Section \ref{subIIB}, which offers important insights into general hybrid NOMA transmission. In particular,   assume    that the first $M-1$ users adopt the OMA mode, and ${\rm U}_M$ chooses ${\rm F}_1$ only, where it is assumed that $h_{M,1}\geq h_{m,i}$, $2\leq i\leq M$. Therefore, the corresponding power allocation solution is $P_{m,m}=\frac{e^R-1}{h_{m,m}}$, $1\leq m \leq M-1$, $P_{M,1}=\frac{e^R(e^R-1)}{h_{M,1}}$, and the other power coefficients are zero. 
 
 %It is important to point out that if the $M-1$ users adopt the OMA mode, problem \ref{pb:1} becomes a convex optimization problem, where the KKT conditions shown in Lemma \ref{lemma2} are the necessary and sufficient conditions. Therefore, 
 
 With this considered  NOMA solution, ${\rm U}_M$'s power outage probability is given by
 \begin{align}
 \mathbb{P}_M^{\rm out} =&  \mathbb{P}\left(
 \frac{e^R(e^R-1)}{h_{M,1}}\geq \rho, e^R  h_{M,M}\leq h_{M,1 }, \right.\\\nonumber &\left. h_{M,1}\geq h_{m,i},2\leq i\leq M
 \right),
 \end{align}
 where the condition in Lemma \ref{lemma2} is used. 
Again applying the complex Gaussian distribution assumption, the users' unordered  channel gains are i.i.d. exponentially distributed. Because $h_{M,1}\geq h_{m,i}$, $2\leq i\leq M-1$, the probability density function (pdf) of $h_{M,1}$ is given by $f(x) =(M-1)\left(1-e^{-x}\right)^{M-2}e^{-x} $. Therefore, ${\rm U}_M$'s power outage probability can be obtained as follows:  
  \begin{align}
 \mathbb{P}_M^{\rm out} =&  \mathbb{P}\left(
 e^R  h_{M,M}\leq h_{M,1 }\leq  \frac{e^R(e^R-1)}{\rho}, \right.\\\nonumber &\left. h_{M,1}\geq h_{m,i},2\leq i\leq M
 \right)\\\nonumber =& \int^{\frac{  e^R-1}{\rho}}_{0} \hspace{-0.6em}e^{-y}\int^{ \frac{e^R(e^R-1)}{\rho}}_{e^R y}  \hspace{-0.6em}(M-1)\left(1-e^{-x}\right)^{M-2}e^{-x} dxdy.
 \end{align}
 With some straightforward algebraic manipulations, the outage probability can be obtained as follows:
   \begin{align}
 \mathbb{P}_M^{\rm out}   
   =& - \left(1-e^{-\frac{e^R(e^R-1)}{\rho}}\right)^{M-1} \left(1 - e^{-\frac{  e^R-1}{\rho}}\right)  \\\nonumber &+
  \sum^{M-1}_{i=0}{M-1\choose i}(-1)^i    \left(1-  e^{-\left(1+ie^R \right)\frac{  e^R-1}{\rho}}\right) .
 \end{align}
 Following   steps similar to those in the proof for Lemma \ref{lemma5}, ${\rm U}_M$'s power outage probability can be approximated  as follows:
   \begin{align}
 \mathbb{P}_M^{\rm out}   
      \approx &   \frac{e^{(M-1)R}(e^R-1)^{M}}{\rho^{M}}    +
     \frac{e^{(M-1)R}}{M} \frac{  (e^R-1)^M}{\rho^M} ,
     \end{align}
     which means that the power diversity order experienced by ${\rm U}_M$ is $M$. Recall that    the power diversity order of OMA is only $1$, which means that the use of hybrid NOMA can significantly reduce ${\rm U}_M$'s energy consumption.  Our simulation results show that  by using  hybrid NOMA,  ${\rm U}_i$'s power diversity order is larger than ${\rm U}_j$'s power diversity order, for $i>j$, which   not only justifies   the adopted SIC decoding order, but also illustrates the capability of hybrid NOMA to meet the users' different energy profiles.  
% \subsection{lower bound can be obtained by an arbitrary choice}
 %\subsection{maybe average power consumption, if there is space}
 
      \begin{figure}[t]\centering \vspace{-0em}
    \epsfig{file=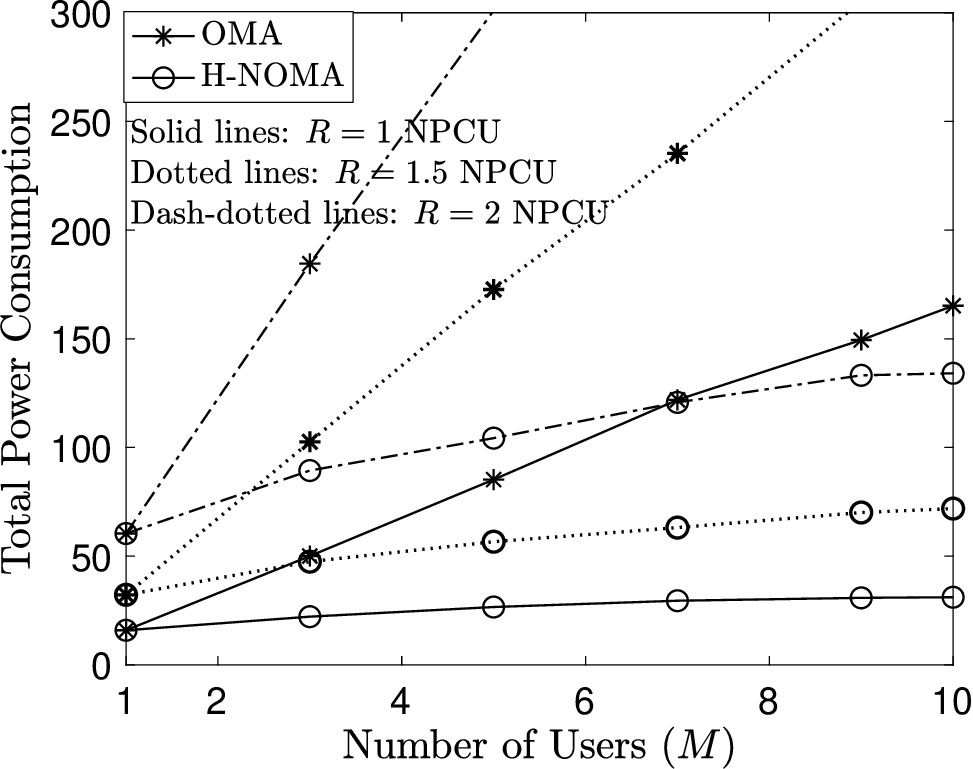, width=0.45\textwidth, clip=}\vspace{-0.5em}
\caption{The total power consumption achieved by   hybrid NOMA assisted OFDMA, where the users' channel gains are assumed to be complex Gaussian distributed with zero mean and unit variance. NPCU denotes   nats per channel use, and $\epsilon=10^{-5}$.    \vspace{-1em}    }\label{fig1}   \vspace{-0.5em} 
\end{figure}

\section{Numerical Results}
In this section, the performance of H-NOMA-OFDMA is demonstrated using computer simulation, where the accuracy of the developed analytical results is also examined. Two metrics, namely the averaged total power consumption and the power outage probability, are used for performance evaluation in the following two subsections, respectively. NPCU denotes nats per channel use. 
\vspace{-1em}
\subsection{Averaged Total Power Consumption}
In Fig. \ref{fig1}, the performance of H-NOMA-OFDMA is shown as a function of the number of the users, $M$. Recall that the users' channels are assumed to be i.i.d. complex Gaussian random variables with zero means and unit variances, which means that   the averaged   power consumption for both OMA and NOMA will be infinite, as illustrated in the following. Take   OMA as an example, where  $\mathcal{E}_{h_{m,m}}\left\{  P_{m}^{\rm OMA}\right\} =\int^{\infty}_{0} \frac{e^R-1}{x} e^{-x}dx   \rightarrow \infty$. In order to avoid this singularity  issue, it is assumed that $h_{m,n}\geq \epsilon$, where $\epsilon=10^{-5}$ is used in Fig. \ref{fig1}. As can be seen from the figure, for the case of $M>1$, the use of hybrid NOMA can significantly reduce the power consumption of OFDMA systems. The performance gain of NOMA over OMA can be further increased by increasing  the target data rate. Furthermore,  Fig. \ref{fig1} shows that   the total power consumption of OMA grows linearly with the number of the users, since $\mathcal{E}_{h_{m,m}}\left\{ \sum^{M}_{m=1} P_{m}^{\rm OMA}\right\} = M\mathcal{E}_{h_{m,m}}\left\{  P_{m}^{\rm OMA}\right\}$. However by using hybrid NOMA,  there is only a slight increase  in power consumption  when the number of the users is increased. Therefore, the performance gain of NOMA over OMA becomes significantly  larger if  there are more users participating in the NOMA cooperation, which makes H-NOMA-OFDMA  particularly  appealing for  umMTC.

      \begin{figure}[t]\centering \vspace{-0em}
    \epsfig{file=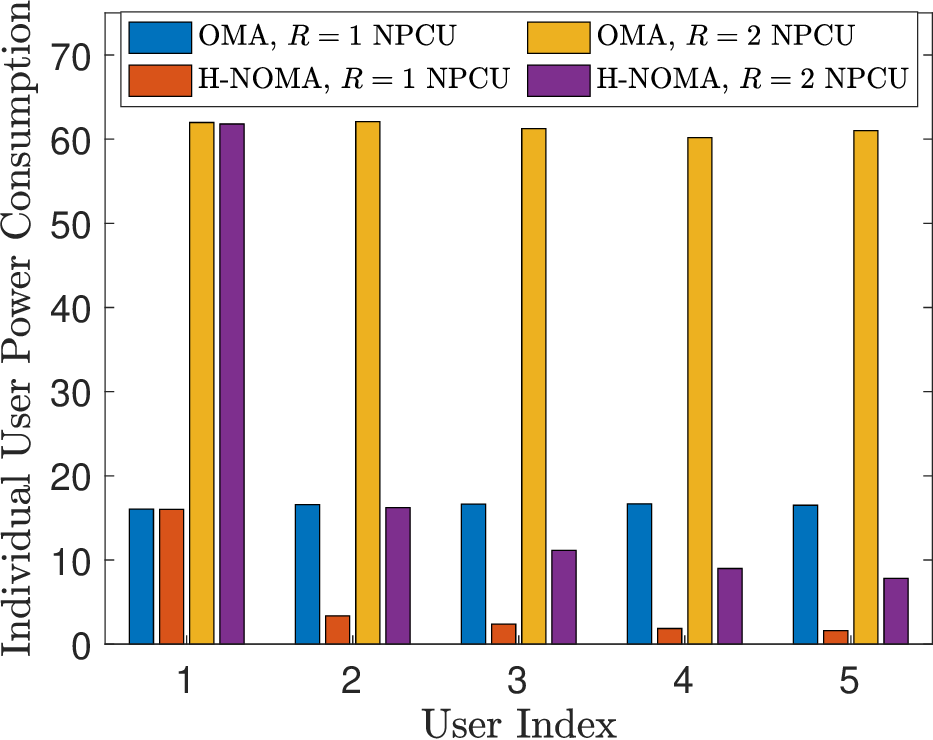, width=0.45\textwidth, clip=}\vspace{-0.5em}
\caption{The users' individual power consumption achieved by hybrid NOMA assisted OFDMA, where   $\epsilon=10^{-4}$ and $M=5$.    \vspace{-1em}    }\label{fig2}   \vspace{-0.1em} 
\end{figure}

In addition to its ability  to reduce the total power consumption, H-NOMA-OFDMA can also meet the users' diverse energy profiles, as shown in Fig. \ref{fig2}. Recall that in this study,  ${\rm U}_m$ is assumed to be more energy constrained than  ${\rm U}_n$, $n<m$. Fig. \ref{fig2} demonstrates that  the use of hybrid NOMA indeed ensures that ${\rm U}_m$ consumes less transmit power than ${\rm U}_n$, $n<m$, whereas with OMA,   all the users use the same transmit power. This benefit is due to the fact that  the use of  hybrid NOMA ensures that  a more energy constrained user has     access to more bandwidth resources than a less energy constrained user.   For example,   ${\rm U}_M$ has access to all   $M$ subcarriers, and hence has more degrees of freedom to reduce its energy consumption, compared to the other users.  Fig. \ref{fig2} also shows that   the use of hybrid NOMA yields less power consumption than OMA, for all   users except ${\rm U}_1$, which is consistent with the observation in Fig. \ref{fig1}.  

      \begin{figure}[t]\centering \vspace{-0em}
    \epsfig{file=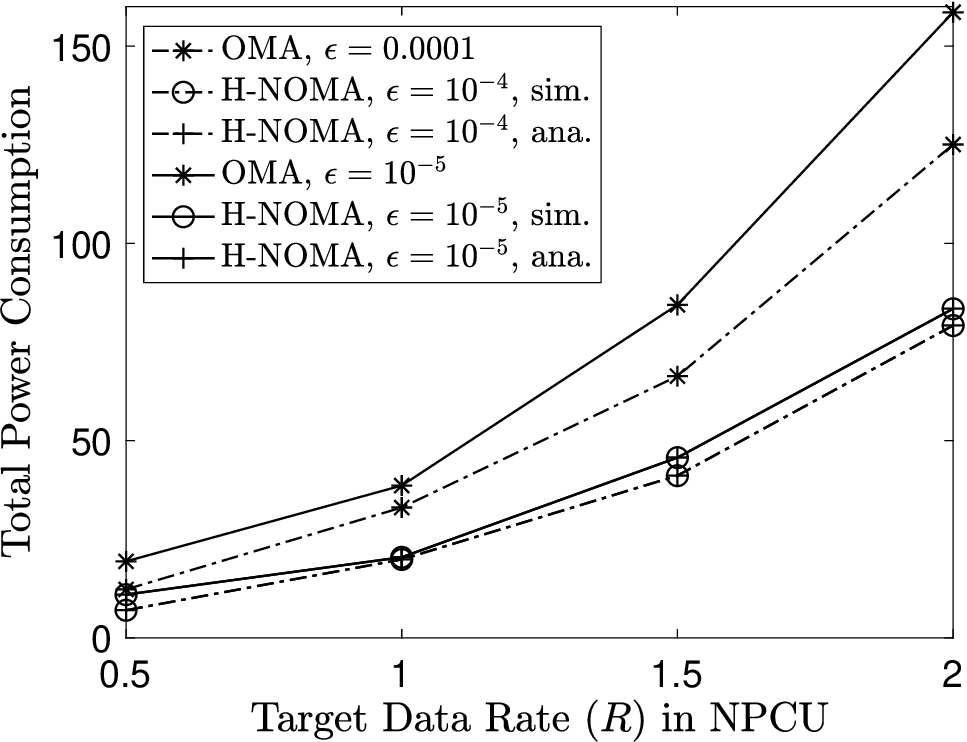, width=0.45\textwidth, clip=}\vspace{-0.5em}
\caption{The performance of hybrid NOMA assisted OFDMA in the two-user special case, where the analytical results are based on Lemma \ref{lemma4}.    \vspace{-1em}    }\label{fig3}   \vspace{-0.5em} 
\end{figure}

Recall that for the two-user special case,  the optimal solution of hybrid NOMA transmission can be obtained, as shown in Lemma \ref{lemma4}. Therefore, Figs. \ref{fig3} and \ref{fig4} are provided to focus on this special case. As can be seen from the figures, the simulation results   match  the analytical results shown in Lemma \ref{lemma4}, which verifies the accuracy of the developed analytical results and confirms the optimality of the closed-form solutions shown in Lemma \ref{lemma4}. 
Recall that in Figs. \ref{fig1} and \ref{fig2}, the users' channel gains are clipped to avoid the energy singularity issue, i.e., $h_{m,n}\geq \epsilon$ is assumed.  Fig. \ref{fig3} is also provided to demonstrate the impact of this channel clipping on the performance of the considered transmission schemes. As can be seen from the figure, the use of smaller $\epsilon$ increases the performance gain of hybrid NOMA over OMA. Or in other words, this channel clipping assumption is more critical to OMA than hybrid NOMA. The reason for this observation is that  in hybrid NOMA, a user has access to more subcarriers and hence is more robust to   deep fading, where the use of deeply faded subcarriers 
   can be avoided.  

      \begin{figure}[t]\centering \vspace{-0em}
    \epsfig{file=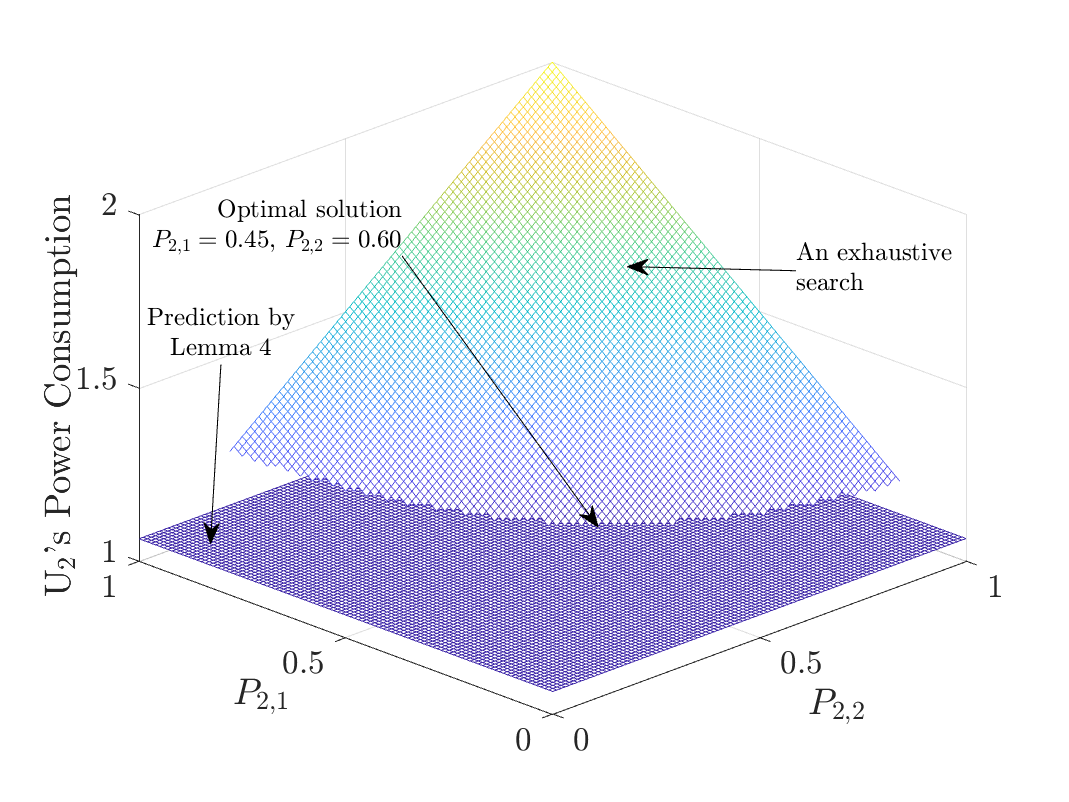, width=0.45\textwidth, clip=}\vspace{-0.5em}
\caption{Comparison between the solutions obtained from Lemma \ref{lemma4} and an exhaustive search, where  $\epsilon=10^{-5}$,  $M=2$ and $R=1$ NPCU.    \vspace{-1em}    }\label{fig4}   \vspace{-1em} 
\end{figure}
\vspace{-1em}
\subsection{Power Outage Probability}
The two-user special case is considered  first, because a closed-form expression for the power outage probability can be obtained in this special case. In Fig. \ref{fig5}, only ${\rm U}_2$'s power outage performance is shown, since ${\rm U}_1$ experiences the same performance in OMA and NOMA. As can be seen from the figure, the use of hybrid NOMA can significantly improve the power outage performance, compared to OMA, particularly at high SNR. In addition, Fig. \ref{fig5} demonstrates that the analytical results   match the simulation results at high SNR, which verifies the accuracy of Lemma \ref{lemma5}. Furthermore,  the slope of the user's power outage probability in NOMA is larger than OMA, which confirms Corollary \ref{corollary2}, i.e., the power diversity order of hybrid NOMA is larger than that of OMA.

The behavior  of hybrid NOMA shown in   Fig. \ref{fig5} is closely related  to that seen in  Fig. \ref{fig4}, and together they reveal the key benefit to use hybrid NOMA. Recall that Fig. \ref{fig4} shows that the use of a smaller $\epsilon$ increases the performance gain of hybrid NOMA over   OMA. Fig. \ref{fig5} shows that  the use of hybrid NOMA achieves a larger power diversity gain than OMA. These performance gains are due to the key feature of hybrid NOMA that a user is offered more transmission opportunities,  e.g., subcarriers, compared to   OMA. Again take the two-user scenario as an example. With OMA, ${\rm U}_2$ has to rely on a single subcarrier for its transmission, which means that its transmit power has to be very large if it suffers deep fading at the assigned subcarrier. With hybrid NOMA, the user has access to two subcarriers. If the user suffers deep fading at one subcarrier, it can discard this subcarrier and use the other one, which is the reason for the substantial   performance gain of hybrid NOMA at small $\epsilon$ shown in Fig. \ref{fig4} and the large power diversity gain shown in Fig. \ref{fig5}. 

      \begin{figure}[t]\centering \vspace{-0em}
    \epsfig{file=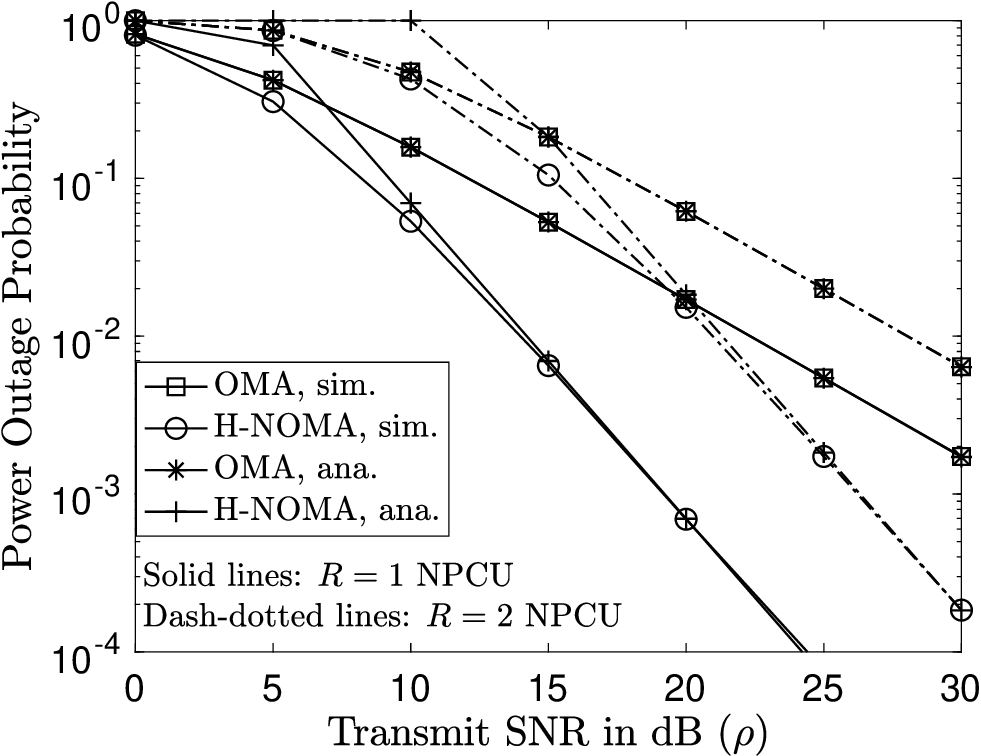, width=0.45\textwidth, clip=}\vspace{-0.5em}
\caption{${\rm U}_2$'s power outage probability achieved by hybrid NOMA assisted OFDMA, where the two-user special case is considered.  The analytical results are based on Lemma \ref{lemma5}.   \vspace{-1em}    }\label{fig5}   \vspace{-1em} 
\end{figure}

%      \begin{figure}[t]\centering \vspace{-0em}
%    \epsfig{file=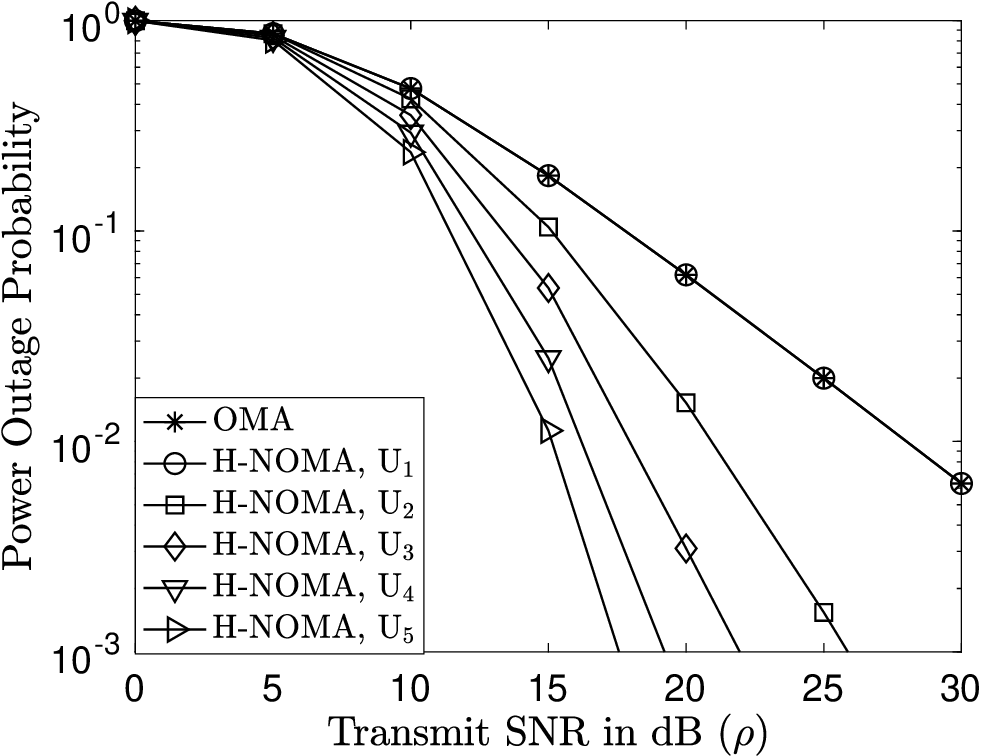, width=0.45\textwidth, clip=}\vspace{-0.5em}
%\caption{  $M=5$ and $R=2$ NPCU  \vspace{-1em}    }\label{fig0}   \vspace{-1em} 
%\end{figure}

     \begin{figure}[t] \vspace{-0em}
\begin{center}
\subfigure[$R=1$ NPCU]{\label{fig6a}\includegraphics[width=0.45\textwidth]{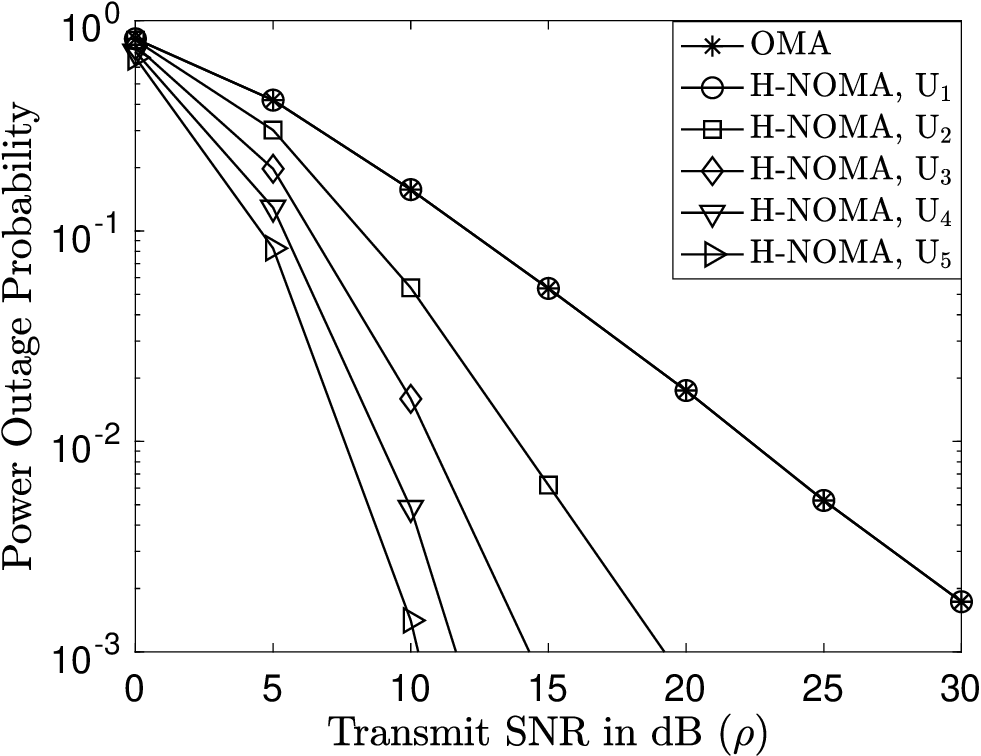}} 
\subfigure[ $R=2$ NPCU]{\label{fig6b}\includegraphics[width=0.45\textwidth]{outage_alluser.eps}} \vspace{-1em}
\end{center}
\caption{The users' individual power outage performance  in a general multi-user scenario, where $M=5$.     \vspace{-1em} }\label{fig6}\vspace{-1em}
\end{figure}

In Fig. \ref{fig6}, a general multi-user scenario is considered, where the users' individual power outage probabilities are shown as   functions of the transmit SNR. Because the users' channel gains are i.i.d., the users' OMA power outage probabilities are the same, which is the reason why there is a single curve for the OMA case. As can be seen from the figure, with hybrid NOMA,  the users' power outage probabilities are appropriate for their energy profiles. In particular, recall that    ${\rm U}_m$ is assumed to be more energy constrained than ${\rm U}_n$, $n<m$. Fig. \ref{fig6a} shows that for the case with $R=1$ NPCU, to realize a power outage probability of $10^{-2}$, ${\rm U}_1$, the user that  is the least energy constrained, requires a transmit SNR of $23$ dB, whereas ${\rm U}_5$, the user that is  the most energy constrained, requires a transmit SNR of $7$ dB only, i.e.,   for this considered case, ${\rm U}_5$'s transmit power is just   one-fifth of ${\rm U}_1$'s transmit power. Similar performance gains can also be observed from Fig. \ref{fig6b} for the case with $R=2$ NPUC.  It is   worthy pointing   out that   by applying the proposed hybrid NOMA scheme, ${\rm U}_1$'s achievable data rate is the same as that of OMA, which is the reason why in Fig. \ref{fig6}, ${\rm U}_1$'s power outage probabilities in OMA and NOMA are the same. 
 %%%%%%%%%%%%%%%%%%%%%%%%%%%%%%%%%%%%%%%%%%%%%%%%%%%%%%%%%%%%%
 \section{Conclusions}
In this paper,  hybrid NOMA assisted OFDMA uplink transmission has been  investigated. In particular, the impact of the unique feature of H-NOMA-OFDMA, i.e., the availability of dynamic user channel conditions, on the system performance has been  analyzed from   two perspectives. From the optimization perspective,   analytical results have been  developed to show that with H-NOMA-OFDMA,  the pure OMA mode is rarely  adopted by the users, and the pure NOMA mode could be optimal for minimizing the users' energy consumption, which are different from those  conclusions   made in earlier studies of hybrid NOMA based on TDMA. From  the statistical  perspective, two new metrics, namely the power outage probability and the power diversity gain, have been developed to    measure the performance gain of hybrid NOMA over OMA. The developed analytical results also demonstrate the capability of hybrid NOMA to meet   users' diverse energy profiles. 
 \appendices
 \section{Proof for Lemma \ref{lemma1}} \label{proof1}
 
 Recall that the KKT conditions are   necessary   for any   optimal solutions, regardless of whether the optimization problem is convex. Therefore, the lemma can be proved by first finding the KKT conditions of problem \eqref{pb:1} and then obtaining the conditions for the pure OMA solution to satisfy the KKT conditions.

Note that the Lagrangian  of problem \eqref{pb:1} can be obtained as follows: 
\begin{align}
L=&  \sum^{M}_{m=1}  \sum^{m}_{n=1}P_{m,n} +\sum^{M}_{m=1} \lambda_m
  \left(R-  \sum^{m}_{n=1} R_{m,n} \right)\\\nonumber &- \sum^{M}_{m=1}  \sum^{m}_{n=1}\lambda_{m,n}P_{m,n},
\end{align}
where $\lambda_m$ is the Lagrange multiplier for the constraint $   \sum^{m}_{n=1} R_{m,n}\geq R$, $1\leq m \leq M$, and $\lambda_{m,n}$  is the Lagrange multiplier for the constraint $P_{m,n}\geq 0$.  For problem \eqref{pb:1}, the corresponding KKT conditions can be expressed as follows: 
\begin{align}
\left\{\begin{array}{l}  
\frac{\partial L}{\partial P_{m,n}}  =0, \quad n\leq m \leq M, 1\leq m \leq M\\
 \lambda_m
  \left(R-  \sum^{m}_{n=1} R_{m,n} \right) =0,  \quad 1\leq m \leq M\\
 \lambda_{m,n}P_{m,n}=0,  \quad n\leq m \leq M, 1\leq m \leq M
 \end{array}\right.,
 \end{align}
 where the primal feasibility conditions are omitted due to space limitations. 

The challenging step to evaluate  the KKT conditions of problem \eqref{pb:1} is to find the stationarity condition, i.e., $\frac{\partial L}{\partial P_{m,n}} =0$, due to the complicated expressions for $R_{m,n}$. In particular, the stationarity conditions can be expressed as follows:
\begin{align}
&\frac{\partial L}{\partial P_{m,n}} = 1- \sum^{M}_{i=1} \lambda_i
 \frac{\partial}{\partial P_{m,n}} \sum^{i}_{j=1} R_{i,j} -\lambda_{m,n} . 
\end{align}

A key observation that can be used to simplify the stationarity conditions is that $P_{m,n}$ has an impact on $R_{i,n}$, $m\leq i \leq M$, only. Therefore,  $\frac{\partial R_{i,j} }{\partial P_{m,n}} =0$, if $j\neq n$, and $\frac{\partial R_{i,n} }{\partial P_{m,n}} =0$, if $1\leq i<m$. By using this observation,  the stationarity condition can be simplified as follows: 
\begin{align}
&\frac{\partial L}{\partial P_{m,n}} = 1- \sum^{M}_{i=m} \lambda_m\frac{\partial R_{i,n} }{\partial P_{m,n}} -\lambda_{m,n} 
\\\nonumber &=
 1-\lambda_{m,n} - \sum^{M}_{i=m} \lambda_m\frac{\partial   }{\partial P_{m,n}} \log\left(
1+
\frac{h_{i,n}P_{i,n}}{\sum^{i-1}_{j=n}h_{j,n}P_{j,n}+1}
\right),
\end{align}
for $n\leq m-1$. The special  case of $n=m$ will be discussed later. 

It is important to note that $P_{m,n}$ appears not only in the numerator of the signal-to-interference-plus-noise ratio (SINR)   of $R_{m,n}$, but also in the denominator of the SINR of $R_{m,i}$, $m+1\leq i\leq M$. To facilitate the derivative calculation,   the stationarity conditions needs to be further rewritten as follows: 
\begin{align}\nonumber 
&\frac{\partial L}{\partial P_{m,n}} =   1-  \lambda_m\frac{\partial   }{\partial P_{m,n}} \log\left(
1+
\frac{h_{m,n}P_{m,n}}{\sum^{m-1}_{j=n}h_{j,n}P_{j,n}+1}
\right)\\\label{pmln} 
&- \sum^{M}_{i=m+1} \lambda_m\frac{\partial   }{\partial P_{m,n}} \log\left(
1+
\frac{h_{i,n}P_{i,n}}{\sum^{i-1}_{j=n}h_{j,n}P_{j,n}+1}
\right)-\lambda_{m,n} . 
\end{align}
With some straightforward algebraic manipulations,   the stationarity conditions  can be expressed as follows:  
\begin{align}\label{partiall}
\frac{\partial L}{\partial P_{m,n}}  
=&   
    1-\lambda_{m,n} -  \lambda_m \frac{ 
 h_{m,n }}{1+
 \sum^{m}_{j=n}h_{j,n}P_{j,n}}\\\nonumber 
&- \sum^{M}_{i=m+1} \lambda_m
\frac{h_{m,n}}{
1+ \sum^{i}_{j=n}h_{j,n}P_{j,n} }
\\\nonumber 
&+ \sum^{M}_{i=m+1} \lambda_m\frac{h_{m,n}}{ 1+\sum^{i-1}_{j=n}h_{j,n}P_{j,n}},
\end{align}
for $n\leq m-1$. For the special case $n=m$, the stationarity conditions in \eqref{pmln} needs to be rewritten as follows:
\begin{align} \nonumber
&\frac{\partial L}{\partial P_{m,m}} =   1-\lambda_{m,m} -  \lambda_m\frac{\partial   }{\partial P_{m,m}} \log\left(
1+
 h_{m,m}P_{m,m} 
\right)\\ 
&- \sum^{M}_{i=m+1} \lambda_m\frac{\partial   }{\partial P_{m,m}} \log\left(
1+
\frac{h_{i,m}P_{i,m}}{\sum^{i-1}_{j=m}h_{j,m}P_{j,m}+1}
\right). 
\end{align}
It can be verified that the result in \eqref{partiall} is still applicable to this special case since $\frac{\partial   }{\partial P_{m,m}} \log\left(
1+
 h_{m,m}P_{m,m} 
\right) = \frac{ h_{m,m}}{1+
 h_{m,m}P_{m,m}}$

Recall that the OMA solution is given by $P_{m,m} = \frac{e^R-1}{h_{m,m}}$, and $P_{m,n}=0$ for $n< m$ and $1\leq m \leq M$.
By substituting the OMA solution into \eqref{partiall}, the stationarity conditions  can be simplified  as follows:
\begin{align}
\frac{\partial L}{\partial P_{m,n}}    
=&   1-\lambda_{m,n} -  \lambda_m \frac{ 
 h_{m,n }}{1+
 h_{n,n}P_{n,n}}\\\nonumber 
&- \sum^{M}_{i=m+1} \lambda_m
\frac{h_{m,n}}{
1+  h_{n,n}P_{n,n} }
\\\nonumber 
&+ \sum^{M}_{i=m+1} \lambda_m\frac{h_{m,n}}{ 1+ h_{n,n}P_{n,n}}
\\\nonumber
=&   1-\lambda_{m,n} -  \lambda_m \frac{ 
 h_{m,n }}{1+
 h_{n,n}P_{n,n}} , 
\end{align}
where the first step follows from the fact that    $ \sum^{i}_{j=n}h_{j,n}P_{j,n} =h_{n,n}P_{n,n}$ and  $\sum^{i-1}_{j=n}h_{j,n}P_{j,n}= h_{n,n}P_{n,n}$ because  $n\leq m\leq i-1$. 

Therefore, with the pure OMA solution,  the KKT conditions can be reduced to the following:
\begin{align}
\left\{\begin{array}{l}  
1-\lambda_{m,n} -  \lambda_m \frac{ 
 h_{m,n }}{1+
 h_{n,n}P_{n,n}} =0,   n\leq m, 1\leq m \leq M\\
 \lambda_m
  \left(R-  \sum^{m}_{n=1} R_{m,n} \right) =0,    1\leq m \leq M\\
 \lambda_{m,n}P_{m,n}=0,    n\leq m \leq M, 1\leq m \leq M
 \end{array}\right.. 
 \end{align}
 Note that  the pure OMA solution satisfies the primary feasibility conditions, and ensures $R-  \sum^{m}_{n=1} R_{m,n}=0$, $1\leq m\leq M$. Therefore, the use of the pure OMA solution simplifies the KKT conditions to the following form: 
 \begin{align}
\left\{\begin{array}{l}  
1-\lambda_{m,n} -  \lambda_m \frac{ 
 h_{m,n }}{1+
 h_{n,n}P_{n,n}} =0,   n\leq m, 1\leq m \leq M\\ 
 \lambda_{m,n}P_{m,n}=0,    n\leq m \leq M, 1\leq m \leq M
 \end{array}\right.. 
 \end{align}
By using the fact    that $P_{m,m} = \frac{e^R-1}{h_{m,m}}$ means that $h_{m,m}P_{m,m}+1 = e^R $, the KKT conditions can be further simplified as follows: 
  \begin{align}\label{xe1}
\left\{\begin{array}{l}  
1-\lambda_{m,n} - e^{-R} \lambda_m  
 h_{m,n } =0,   n\leq m , 1\leq m \leq M\\ 
 \lambda_{m,n}P_{m,n}=0,    n\leq m \leq M, 1\leq m \leq M
 \end{array}\right.. 
 \end{align}

 Recall that  for the OMA case,  $P_{m,m} = \frac{e^R-1}{h_{m,m}}$, and $P_{m,n}=0$ for $n< m$ and $1\leq m \leq M$, which means $\lambda_{m,m} =0$ for $1\leq m \leq M$. Therefore, the first part of the KKT conditions in  \eqref{xe1} can be expressed as follows:
  \begin{align} 
1  - e^{-R} \lambda_m  
 h_{m,m } =0, \quad  1\leq m \leq M , 
 \end{align}
 which means that the Lagrange multipliers corresponding to the constraint $ \sum^{m}_{n=1} R_{m,n}\geq R$, $\lambda_m$, can be obtained as follows:
   \begin{align} 
\lambda_m  = \frac{e^{R} }{ h_{m,m } } 
 , \quad  1\leq m \leq M ,
 \end{align}
 which are always positive. 
The Lagrange multipliers corresponding to   zero power allocation coefficients can be obtained as follows:
   \begin{align} 
\lambda_{m,n} =1- e^{-R} \lambda_m  
 h_{m,n }   = 1-  \frac{  h_{m,n }  }{ h_{m,m } } ,
 \end{align}
 where $1\leq n\leq m-1$ and $1\leq m \leq M$. Therefore, a necessary condition for the pure OMA solution to be optimal is given by  
    \begin{align} 
  1-  \frac{  h_{m,n }  }{ h_{m,m } } \geq 0\rightarrow \frac{  h_{m,n }  }{ h_{m,m }}\leq 1, \quad 1\leq n\leq m-1.
 \end{align}
 The proof of the lemma is complete. $\qed$
 
 \section{Proof for Lemma \ref{lemma2}}\label{proof2}
 
By following the steps in the proof of Lemma \ref{lemma1},  for the considered pure NOMA solution,   the stationarity conditions can be simplified as follows:
 \begin{align}
\frac{\partial L}{\partial P_{m,n}}  
=&      1-\lambda_{m,n} -  \lambda_m \frac{ 
 h_{m,n }}{1+ h_{n,n}P_{n,n}} .
\end{align} 
Recall that the KKT conditions for problem \eqref{pb:1} are given by
\begin{align}\label{xer1}
\left\{\begin{array}{l}  
\frac{\partial L}{\partial P_{m,n}}  =0, \quad n\leq m \leq M, 1\leq m \leq M\\
 \lambda_m
  \left(R-  \sum^{m}_{n=1} R_{m,n} \right) =0,  \quad 1\leq m \leq M\\
 \lambda_{m,n}P_{m,n}=0,  \quad n\leq m \leq M, 1\leq m \leq M
 \end{array}\right..
 \end{align}
 
By substituting the pure NOMA solution into \eqref{xer1}, the  KKT conditions can be simplified as follows:
\begin{align}
\left\{\begin{array}{l}  
1-\lambda_{m,n} - e^{-R} \lambda_m 
 h_{m,n }  =0, \quad n\leq m \leq M, n\neq  m \neq M \\
 1-\lambda_{M,M}- \lambda_M h_{M,M}=0\\
 \lambda_{m,n}P_{m,n}=0,  \quad n\leq m \leq M, 1\leq m \leq M
 \end{array}\hspace{-1em}\right..
 \end{align}
 Recall that the considered pure NOMA solution is $P_{m,m}=\frac{e^R-1}{h_{m,m}}$, $1\leq m \leq M-1$, $P_{M,1}=\frac{e^R(e^R-1)}{h_{M,1}}$, and the other power coefficients are zero.  Therefore,  for $\lambda_{m,m}=0$,   $1\leq m \leq M-1$, the Lagrange multipliers corresponding to  $ \sum^{m}_{n=1} R_{m,n}\geq R$, $1\leq m\leq M-1$, are given by 
 \begin{align} 
1 - e^{-R} \lambda_m 
 h_{m,m }  =0\rightarrow     \lambda_m 
 =\frac{e^R}{ h_{m,m } },
  \end{align}
 which are always positive.  The Lagrange multipliers corresponding to  $ \sum^{M}_{n=1} R_{M,n}\geq R$, $\lambda_M$, and the one corresponding to $P_{M,M}$, $\lambda_{M,M}$, are coupled as follows:
  \begin{align}
\left\{\begin{array}{l}  
 1-\lambda_{M,M}- \lambda_M h_{M,M}=0  \\
1  - e^{-R} \lambda_M 
 h_{M,1 }  =0
 \end{array}\right.,
 \end{align}
 where the second equation follows from the fact that $\lambda_{M,1}=0$ since $P_{M,1}\neq 0$. By solving the above two equations, the two multipliers can be obtained as follows:
   \begin{align}\label{lem1}
 \lambda_{M,M}= 1- \frac{e^R}{ h_{M,1 } } h_{M,M}, \quad 
  \lambda_M 
 =\frac{e^R}{ h_{M,1 } }
 \end{align}
 Following     steps similar to those in the proof of Lemma \ref{lemma1}, the other multipliers can be obtained as follows: 
 \begin{align}\label{lem2}
\left\{\begin{array}{l}  
 \lambda_{m,n} =   1-\frac{  h_{m,n }  }{ h_{m,m } }
 , \quad 1\leq n\leq m \leq M-1 \\
\lambda_{M,n} =1-   \frac{ h_{M,n } }{ h_{M,1 } }
 ,  \quad 2\leq n\leq M-1
 \end{array}\right.. 
 \end{align}
 Combining \eqref{lem1} and \eqref{lem2}, the conditions shown in the lemma can be obtained, which concludes the proof. $\qed$
 
 %%%%%%%%%%%%%%%%%%%%%%%%%%%%%%%%%%%%%%%%%%%%%%%%%%%%%%
 \section{Proof for Lemma \ref{lemma4}}\label{proof4}
 Recall that for the two-user special case,  problem \ref{pb:6} can be simplified as follows:
  \begin{problem}\label{pb:8} 
  \begin{alignat}{2}
\underset{P_{2,n}\geq0   }{\rm{min}} &\quad  P_{2,1}+P_{2,2}\label{8tst:1}
\\ s.t. &\quad \log\left(
1+
\frac{h_{2,1}P_{2,1}}{ h_{1,1}P_{1,1}+1}
\right)+\log\left(
1+
 h_{2,2}P_{2,2} 
\right)\geq R .  \nonumber 
  \end{alignat}
\end{problem}  
By using the fact that $P_{1,1}^{\rm SRA} = \frac{e^R-1}{h_{1,1}}$, problem \eqref{pb:8} can be simplified as follows:
 \begin{problem}\label{pb:9} 
  \begin{alignat}{2}
\underset{P_{2,n}\geq0   }{\rm{min}} &\quad  P_{2,1}+P_{2,2}\label{9tst:1}
\\ s.t. &\quad \log\left(
1+
e^{-R} h_{2,1}P_{2,1} 
\right)+\log\left(
1+
 h_{2,2}P_{2,2} 
\right)\geq R .   \label{9tst:2}
  \end{alignat}
\end{problem}

First note that the 
  Lagrangian  of problem \eqref{pb:9}  is given by
\begin{align}
L =& P_{2,1}+P_{2,2} - \sum^{3}_{i=2}\lambda_i P_{2,i-1}\\\nonumber &+
\lambda_1 \left(
R - \log\left(
1+
e^{-R} h_{2,1}P_{2,1} 
\right)-\log\left(
1+
 h_{2,2}P_{2,2} 
\right)
\right),
\end{align}
where $\lambda_i$, $1\leq i \leq 3$,  denote the Lagrange multipliers, 
and the corresponding KKT conditions are given by
\begin{align}
\left\{\begin{array}{l}  
1-\frac{\lambda_1 e^{-R} h_{2,1}}{1+
e^{-R} h_{2,1}P_{2,1} }-\lambda_2=0,\\
1 - \frac{\lambda_1  h_{2,2}}{1+
 h_{2,2}P_{2,2} }-\lambda_3=0\\
R - \log\left(
1+
e^{-R} h_{2,1}P_{2,1} 
\right)-\log\left(
1+
 h_{2,2}P_{2,2} \right)=0
 \end{array}\right.. 
 \end{align}
 It is challenging to directly establish necessary and sufficient  condition for the hybrid NOMA solution to be optimal. Therefore,   conditions for adopting the pure NOMA and OMA solutions are obtained first, as shown in the following. 
 
 \subsubsection{Pure NOMA Solution} If the pure NOMA solution is used, ${\rm U}_2$ does not use ${\rm F}_2$, which means that problem \eqref{pb:9} can be simplified as follows: 
 \begin{problem}\label{pb:10} 
  \begin{alignat}{2}
\underset{P_{m,n}\geq0   }{\rm{min}} &\quad  P_{2,1} \label{10tst:1}
\\ s.t. &\quad \log\left(
1+
e^{-R} h_{2,1}P_{2,1} 
\right) \geq R .  \label{10tst:2}  
  \end{alignat}
\end{problem} 
With some straightforward algebraic manipulations,   the pure NOMA solution is obtained as follows: 
\begin{align}
P_{2,1} =\frac{e^R(e^{R}-1)}{h_{2,1}}, \quad P_{2,2}=0.
\end{align} 
An important part of Lemma \ref{lemma4} is to establish   optimality conditions for    different modes, and an  optimality  condition for   the pure NOMA mode is established as follows. 
Because $P_{2,1} \neq 0$ and $P_{2,2}=0$,   $\lambda_2=0$ and possibly $\lambda_3\neq 0$. 
By using the closed-form expressions of the pure NOMA solution,    the KKT conditions can be rewritten as follows: 
\begin{align}
\left\{\begin{array}{l}  
1-\frac{\lambda_1 e^{-R} h_{2,1}}{1+
e^{-R} h_{2,1} \frac{e^R(e^{R}-1)}{h_{2,1}} } =0\\
1 -  \lambda_1  h_{2,2} -\lambda_3=0 
 \end{array}\right.,
 \end{align}
 which yield the following closed-form expressions for the Lagrange multipliers: 
 \begin{align}
\left\{\begin{array}{l}  
1- \lambda_1 e^{-2R} h_{2,1}  =0\rightarrow \lambda_1  =\frac{e^{2R}}{ h_{2,1}}\\
1 -  \lambda_1  h_{2,2} -\lambda_3=0\rightarrow \lambda_3 = 1 -  \frac{e^{2R}h_{2,2}}{ h_{2,1}}     
 \end{array}\right..
 \end{align}
 Because the considered optimization problem for the two-user special case is convex,    the    necessary and sufficient  condition for pure NOMA to be optimal is given by
 \begin{align}
 1 -  \frac{e^{2R}h_{2,2}}{ h_{2,1}}   \geq 0\rightarrow    \frac{h_{2,2}}{ h_{2,1}}   \leq e^{-2R}. 
 \end{align}

\subsubsection{Pure OMA Solution} Note that the pure OMA solution is simply  given by
\begin{align}
P_{2,1}=0, \quad P_{2,2} = \frac{e^{R}-1}{h_{2,2}}.
\end{align}
In order to establish the    necessary and sufficient  condition for the pure OMA solution to be optimal,   closed-form expressions for the Lagrange multipliers are needed. Because $P_{2,2} \neq 0$, 
 $\lambda_3=0$. By using  the fact $\lambda_3=0$ and the closed-form expression for the pure OMA solution, 
the KKT conditions can be simplified as follows:  
 \begin{align}
\left\{\begin{array}{l}  
1- \lambda_1 e^{-R} h_{2,1} -\lambda_2=0\\
1 - \frac{\lambda_1  h_{2,2}}{1+
 h_{2,2} \frac{e^{R}-1}{h_{2,2}}} =0 
 \end{array}\right.,
 \end{align}
which leads to the following closed-form expressions for the multipliers: 
 \begin{align}
\left\{\begin{array}{l}  
1- \lambda_1 e^{-R} h_{2,1} -\lambda_2=0\rightarrow \lambda_2 = 1- \frac{ h_{2,1}}{ h_{2,2}}    \\
1 -  e^{-R}\lambda_1  h_{2,2} =0 \rightarrow  \lambda_1  = \frac{ e^{R}}{ h_{2,2}}
 \end{array}\right.. 
 \end{align}
 Therefore,  the    necessary and sufficient  condition for the optimality of pure OMA is given by
 \begin{align}
 1- \frac{ h_{2,1}}{ h_{2,2}} \geq 0\rightarrow  \frac{ h_{2,1}}{ h_{2,2}} \leq 1\rightarrow 
 \frac{ h_{2,2}}{ h_{2,1}} \geq 1.
 \end{align}
 
 \subsubsection{Hybrid NOMA Solution}  
 Because tthe    necessary and sufficient  conditions for the optimality of the pure NOMA and OMA solutions have been established, the    necessary and sufficient  condition for hybrid NOMA to be optimal can be straightforwardly obtained as shown in the lemma. The remainder of the proof is to focus on how to obtain  the closed-form expression for the hybrid NOMA solution. By ignoring the constraints, $P_{2,n}\geq 0$, the Lagrange of problem \eqref{pb:9} can be simplified as follows: 
\begin{align}
L =& P_{2,1}+P_{2,2} \\\nonumber &+
\lambda \left(
R - \log\left(
1+
e^{-R} h_{2,1}P_{2,1} 
\right)-\log\left(
1+
 h_{2,2}P_{2,2} 
\right)\
\right),
\end{align}
where $
\lambda $ is the only multiplier corresponding to the constraint \eqref{9tst:2}. Therefore, the corresponding KKT conditions can be expressed as follows: 
\begin{align}\label{kktxxx}
\left\{\begin{array}{l}  
1-\frac{\lambda e^{-R} h_{2,1}}{1+
e^{-R} h_{2,1}P_{2,1} }=0\\
1 - \frac{\lambda  h_{2,2}}{1+
 h_{2,2}P_{2,2} }=0\\
R - \log\left(
1+
e^{-R} h_{2,1}P_{2,1} 
\right)-\log\left(
1+
 h_{2,2}P_{2,2} \right)=0
 \end{array}\right..
 \end{align}
By combining the above KKT conditions, the following equality can be established:
\begin{align}
R - \log\left(\lambda e^{-R} h_{2,1}
\right)-\log\left(\lambda  h_{2,2} \right)=0,
\end{align}
which means that the multiplier can be obtained as follows:
\begin{align} 
 \lambda =\frac{e^{R}}{ \sqrt{ h_{2,1}   h_{2,2}}}. 
\end{align}
By substituting the expression for $\lambda$ into the first two equations in \eqref{kktxxx}, the following is obtained:
\begin{align}
 1+
e^{-R} h_{2,1}P_{2,1} =&\lambda e^{-R} h_{2,1},\\
1+
 h_{2,2}P_{2,2} =&\lambda  h_{2,2} ,
 \end{align}
 which leads to a closed-form expression for the hybrid NOMA solution:  
 \begin{align}
 P_{2,1} =\frac{e^{R}}{ \sqrt{ h_{2,1}   h_{2,2}}}    -\frac{e^{R}}{ h_{2,1}},\\
P_{2,2} =\frac{e^{R}}{ \sqrt{ h_{2,1}   h_{2,2}}}    -\frac{1}{ h_{2,2}}.
 \end{align}
 The proof of the lemma is complete.  $\qed$
 
 %%%%%%%%%%%%%%%%%%%%%%%%%%%%%%%%%%%%%%%%%%%
 \section{Proof for Lemma \ref{lemma5}} \label{proof5}
 Recall that all   three transmission modes, namely pure OMA, pure NOMA and hybrid NOMA, could  be adopted, which means that the power outage probability can be expressed as follows:
\begin{align}
\mathbb{P}^{\rm out}_2 =& \mathbb{P}\left(E^{\rm OMA}, \frac{ h_{2,2}}{ h_{2,1}} \geq 1  \right) \\\nonumber
&+\mathbb{P}\left(  E^{\rm NOMA}, \frac{ h_{2,2}}{ h_{2,1}}  \leq e^{-2R}  \right) 
\\\nonumber &+\mathbb{P}\left( E^{\rm Hybrid}, e^{-2R}\leq  \frac{ h_{2,2}}{ h_{2,1}}  < 1 \right) ,
\end{align}
where $E^{\rm OMA}$, $E^{\rm NOMA}$, and $E^{\rm Hybrid}$ denote the power outage events associated to the three transmission modes, namely OMA, NOMA, and hybrid NOMA, respectively.  
 The three outage cases will be analyzed separately in the following subsections.

\subsubsection{Hybrid NOMA} For notational simplicity, define $\mathbb{P}_1\triangleq \mathbb{P}\left(E^{\rm OMA}, \frac{ h_{2,2}}{ h_{2,1}} \geq 1  \right)$, which can be evaluated as follows:
\begin{align}
\mathbb{P}_1 =& \mathbb{P}\left(P_{2,1}+P_{2,2}\geq\rho,\frac{ h_{2,2}}{ h_{2,1}} \geq 1  \right) 
\\\nonumber =& \mathbb{P}\left(\frac{2e^{R}}{ \sqrt{ h_{2,1}   h_{2,2}}}    -\frac{e^{R}}{ h_{2,1}}   -\frac{1}{ h_{2,2}}\geq\rho, e^{-2R}\leq  \frac{ h_{2,2}}{ h_{2,1}}  < 1  \right). 
\end{align}
By defining $x_1=\sqrt{h_{2,1}}$ and $x_2=\sqrt{h_{2,2}}$, and with some algebraic manipulations, the outage probability can be expressed as follows:
\begin{align}
\mathbb{P}_1 =&    \mathbb{P}\left(  ( \rho x_2^2+1) x_1^2- 2e^{R}x_2x_1     +   e^{R} x_2^2 \leq 0 , \right.\\\nonumber &\left.  x_{1} e^{-R}\leq   x_{2}  < x_{1} \right).
\end{align}

Note that the roots of $  ( \rho x_2^2+1) x_1^2- 2e^{R}x_2x_1     +   e^{R} x_2^2 = 0$ are given by 
\begin{align}
\frac{e^{R}x_2 \pm  x_2\sqrt{e^{2R}  -e^{R}  ( \rho x_2^2+1)} }{  \rho x_2^2+1}\geq 0,
\end{align}
where  the following additional constraint is required:  
\begin{align}
e^{2R}  -e^{R}  ( \rho x_2^2+1)\geq 0.
\end{align} We note that the condition for the optimality of hybrid NOMA can be expressed as follows: 
\begin{align}
 x_{1} e^{-R}\leq   x_{2}  < x_{1}  \rightarrow x_2\leq x_1\leq e^Rx_2 . 
\end{align}
Therefore, the considered outage probability $\mathbb{P}_1$ can be expressed as follows:
\begin{align}
\mathbb{P}_1
=& \mathbb{P}\left(\frac{e^{R}x_2 -  x_2\sqrt{e^{2R}  -e^{R}  ( \rho x_2^2+1)} }{  \rho x_2^2+1}\right.
\\\nonumber &\left. \leq x_1\leq \frac{e^{R}x_2 +  x_2\sqrt{e^{2R}  -e^{R}  ( \rho x_2^2+1)} }{  \rho x_2^2+1}, \right.\\\nonumber
&\left.   e^{2R}  -e^{R}  ( \rho x_2^2+1)\geq 0,   x_2\leq x_1\leq e^Rx_2  \right).
\end{align}

By defining $f_1(x_2)= \frac{e^{R}x_2 -  x_2\sqrt{e^{2R}  -e^{R}  ( \rho x_2^2+1)} }{  \rho x_2^2+1}$ and $f_2(x_2)=\frac{e^{R}x_2 +  x_2\sqrt{e^{2R}  -e^{R}  ( \rho x_2^2+1)} }{  \rho x_2^2+1}$,    the   outage probability   can be simplified as follows:
\begin{align}\nonumber
\mathbb{P}_1
= & \mathbb{P}\left(    f_1(x_2)\leq   x_1\leq f_2(x_2),   e^{2R}  -e^{R}  ( \rho x_2^2+1)\geq 0,\right.\\\nonumber
&\left.  x_2\leq x_1\leq e^Rx_2  \right)
\\\nonumber
=&     \mathbb{P}\left(    f_1^2(x_2)\leq   x_1^2\leq f_2^2(x_2)  \right.\\\nonumber &\left.     x_2^2\leq \frac{e^{R}-1}{\rho} ,  x_2^2\leq x_1^2\leq e^{2R}x_2^2  \right).
\end{align}
By using the assumption that all the users' channel gains are complex Gaussian distributed with zero mean and unit variance, the outage probability can be evaluated as follows: 
\begin{align}
\mathbb{P}_1
= & \int^{\frac{e^{R}-1}{\rho}}_{0}\hspace{-1em}\left(
e^{-\max\{z,f_1^2(\sqrt{z})\}} -e^{-\min\left\{
e^{2R}z, f_2^2(\sqrt{z})
\right\}} 
\right)e^{-z}dz .
\end{align}
By defining $y=\rho z+1$, the above integral can be simplified as follows: 
\begin{align}
\mathbb{P}_1
= & \int^{e^R}_{1}\left(
e^{-\max\left\{\frac{y-1}{\rho} ,f_1^2\left(\sqrt{\frac{y-1}{\rho} }\right)\right\}}\right.\\\nonumber &\left. -e^{-\min\left\{
e^{2R}\frac{y-1}{\rho} , f_2^2\left(\sqrt{\frac{y-1}{\rho} }\right)
\right\}} 
\right)e^{-\frac{y-1}{\rho}}\frac{dy}{\rho} . 
\end{align}
Recall that $   f_2^2\left(\sqrt{z}\right)$ is given by
\begin{align}
 f_2^2\left(\sqrt{z}\right) =  z\frac{2e^{2R}     -e^{R}  ( \rho z+1) +2e^{R}\sqrt{e^{2R}  -e^{R}  ( \rho z+1)} }{ ( \rho z+1)^2},
\end{align}
which means that $f_2^2\left(\sqrt{\frac{y-1}{\rho} }\right)$ can be expressed as follows:
\begin{align}\label{f2y}
 f_2^2\left(\sqrt{\frac{y-1}{\rho} }\right) =  \frac{y-1}{\rho} \frac{2e^{2R}     -e^{R} y +2e^{R}\sqrt{e^{2R}  -e^{R}  y} }{ y^2}.
\end{align}
An important observation from \eqref{f2y} is that if $1\leq y\leq e^R$, $ f_2^2\left(\sqrt{\frac{y-1}{\rho} }\right) \rightarrow 0$ at high SNR, i.e., $\rho\rightarrow \infty$. Similarly, $ f_1^2\left(\sqrt{\frac{y-1}{\rho} }\right) \rightarrow 0$ if $\rho\rightarrow \infty$. Therefore, the outage probability can be approximated at high SNR as follows:
\begin{align}
\mathbb{P}_1
\approx & \int^{e^R}_{1}\left(\min\left\{
e^{2R}\frac{y-1}{\rho} , f_2^2\left(\sqrt{\frac{y-1}{\rho} }\right)
\right\}
 \right.\\\nonumber &\left. -\max\left\{\frac{y-1}{\rho} ,f_1^2\left(\sqrt{\frac{y-1}{\rho} }\right)\right\}  
\right) \frac{dy}{\rho} ,
\end{align}
which can be further simplified    as follows:
\begin{align}\label{p1}
\mathbb{P}_1 
\approx & \int^{e^R}_{1}\frac{y-1}{\rho} \left(\min\left\{
e^{2R} ,g_2(y) 
\right\}
 \right.\\\nonumber &\left. -\max\left\{1 ,g_1(y)\right\}  
\right) \frac{dy}{\rho} .
\end{align}

\subsubsection{Pure NOMA} Define $\mathbb{P}_2\triangleq \mathbb{P}\left(  E^{\rm NOMA}, \frac{ h_{2,2}}{ h_{2,1}}  \leq e^{-2R}  \right) $, which can be evaluated as follows: 
\begin{align}
\mathbb{P}_2=& \mathbb{P}\left(P_{2,1}+P_{2,2}\geq\rho,  \frac{ h_{2,2}}{ h_{2,1}}  \leq e^{-2R} \right)  
\\\nonumber
\overset{(a)}{=}& \mathbb{P}\left(\frac{e^R(e^{R}-1)}{h_{2,1}} \geq\rho,  \frac{ h_{2,2}}{ h_{2,1}}  \leq e^{-2R} \right)  
\\\nonumber
=& \mathbb{P}\left(  e^{2R} h_{2,2}  \leq   h_{2,1}\leq \frac{e^R(e^{R}-1)}{\rho}  \right)  ,
\end{align}
where step (a) follows by using the pure OMA power allocation coefficients. Again by using the complex Gaussian distribution, the outage probability can be obtained as follows: 
\begin{align}
\mathbb{P}_2
=&\int_{0}^{\frac{e^{-R}(e^{R}-1)}{\rho} }\left(e^{ -e^{2R} x}- e^{- \frac{e^R(e^{R}-1)}{\rho} }\right)e^{-x}dx
\\\nonumber =& \frac{1-e^{ -\left(1+e^{2R} \right)\frac{e^{-R}(e^{R}-1)}{\rho} }}{1+e^{2R}}- e^{- \frac{e^R(e^{R}-1)}{\rho} }\left(1-e^{-\frac{e^{-R}(e^{R}-1)}{\rho} } \right).
\end{align}
At high SNR, the outage probability can be approximated as follows: 
 \begin{align}\nonumber
\mathbb{P}_2
\approx & \frac{ \left(1+e^{2R} \right)\frac{e^{-R}(e^{R}-1)}{\rho} }{1+e^{2R}}-\frac{ \left(1+e^{2R} \right)^2\frac{e^{-2R}(e^{R}-1)^2}{\rho^2} }{2(1+e^{2R})}\\\nonumber &- \left(1 - \frac{e^R(e^{R}-1)}{\rho}  \right)\\\nonumber &\times \left(\frac{e^{-R}(e^{R}-1)}{\rho} -\frac{e^{-2R}(e^{R}-1)^2}{2\rho^2} \right)
\\\nonumber\approx&    - \left(1+e^{2R} \right)\frac{e^{-2R}(e^{R}-1)^2}{2\rho^2}  \\\nonumber &+
\frac{e^{-2R}(e^{R}-1)^2}{2\rho^2} +
  \frac{e^R(e^{R}-1)}{\rho}  \frac{e^{-R}(e^{R}-1)}{\rho}  
  \\\label{p2}
= &    
  \frac{ (e^{R}-1)^2}{2\rho^2}    .
\end{align}
 
 \subsubsection{Pure OMA} Define $ \mathbb{P}_3\triangleq \mathbb{P}\left(E^{\rm OMA}, \frac{ h_{2,2}}{ h_{2,1}} \geq 1  \right)$, which can be expressed as follows: 
 \begin{align}
\mathbb{P}_3 =& \mathbb{P}\left(P_{2,1}+P_{2,2}\geq\rho, \frac{ h_{2,2}}{ h_{2,1}} \geq 1 \right)   
\\\nonumber =& \mathbb{P}\left(\frac{e^{R}-1}{\rho }\geq h_{2,2}  \geq h_{2,1} \right)   . 
\end{align}
Again by applying the complex Gaussian distribution, the outage probability can be evaluated as follows: 
 \begin{align}
\mathbb{P}_3 =&  \int^{\frac{e^{R}-1}{\rho }}_{0}  \left(e^{-x}-e^{-\frac{e^{R}-1}{\rho }}
\right)e^{-x} dx\\\nonumber 
 =&   \frac{1}{2}\left(1-e^{-2\frac{e^{R}-1}{\rho }}\right)-e^{-\frac{e^{R}-1}{\rho }}\left(1-e^{-\frac{e^{R}-1}{\rho }}
\right)  . 
\end{align}
At high SNR, $\mathbb{P}_3$ can be approximated as follows: 
 \begin{align}\label{p3}
\mathbb{P}_3 \approx&    \frac{1}{2} \left(2\frac{e^{R}-1}{\rho }-\frac{1}{2}2^2\frac{(e^{R}-1)^2}{\rho^2 }\right) \\\nonumber &-\left(1-\frac{e^{R}-1}{\rho }\right)\left(\frac{e^{R}-1}{\rho }-\frac{1}{2}\frac{(e^{R}-1)^2}{\rho^2 }\right)
\\\nonumber \approx&  \frac{1}{2}  \frac{(e^{R}-1)^2}{\rho ^2}  .
\end{align}
It is interesting to observe that the high SNR approximations for the pure OMA and pure NOMA cases are the  same. 

By combining \eqref{p1}, \eqref{p2} and \eqref{p3}, the overall power outage probability can be approximated at high SNR as follows:
 \begin{align}
\mathbb{P}^{\rm out} \approx&      \frac{(e^{R}-1)^2}{\rho ^2}   +\int^{e^R}_{1}\frac{y-1}{\rho} \left(\min\left\{
e^{2R} ,g_2(y) 
\right\}
 \right.\\\nonumber &\left. -\max\left\{1 ,g_1(y)\right\}  
\right) \frac{dy}{\rho}.
\end{align}
 The proof of the lemma is complete.  $\qed$

%\linespread{1}
\bibliographystyle{IEEEtran}
\bibliography{IEEEfull,trasfer}

  \end{document}